\documentclass[11pt]{llncs}

\usepackage[english]{babel}
\usepackage[T1]{fontenc}
\usepackage[latin1]{inputenc}
\usepackage{times}
\usepackage{amsfonts,amssymb,amsmath,amscd,latexsym,graphicx}

\newcommand{\set}[1]{\left\{ #1 \right\}}

\newcommand{\bin}{\mathrm{bin}}

\newcommand{\G}{\mathcal{G}}

\newcommand{\Alt}[1]{\mathrm{Alt}\left( #1 \right)}
\newcommand{\ND}[1]{\mathrm{NonDet}\left( #1 \right)}
\newcommand{\Det}[1]{\mathrm{Det}\left( #1 \right)}
\newcommand{\Reg}{\mathrm{Reg}}

\renewcommand{\L}{\mathcal{L}}

\newcommand{\Eq}{\textsc{CountEq}}
\newcommand{\NotEq}{\textsc{NotEq}}
\newcommand{\Leq}{\textsc{Lexicographic}}
\newcommand{\Prime}{\textsc{Primes}}

\newcommand{\M}{\mathcal{M}}

\newcommand{\N}{\mathbb{N}}
\newcommand{\Z}{\mathbb{Z}}
\newcommand{\A}{\mathcal{A}}
\newcommand{\B}{\mathcal{B}}

\begin{document}

\title{Lower Bounds for Alternating Online State Complexity\thanks{This work was done in part while the author was visiting the Simons Institute for the Theory of Computing.}}

\author{Nathana\"el Fijalkow}

\institute{University of Oxford, United Kingdom}

\maketitle

\begin{abstract}
The notion of Online State Complexity, introduced by Karp in 1967, quantifies the amount of states required to solve a given problem using an online algorithm,
which is represented by a \textit{deterministic} machine scanning the input from left to right in one pass.

In this paper, we extend the setting to \textit{alternating} machines as introduced by Chandra, Kozen and Stockmeyer in 1976:
such machines run independent passes scanning the input from left to right and gather their answers through boolean combinations.

We devise a lower bound technique relying on boundedly generated lattices of languages, and give two applications of this technique.
The first is a hierarchy theorem, stating that the polynomial hierarchy of alternating online state complexity is infinite, 
and the second is a linear lower bound on the alternating online state complexity of the prime numbers written in binary.
This second result strengthens a result of Hartmanis and Shank from 1968, which implies an exponentially worse lower bound for the same model.
\end{abstract}


\noindent\textbf{Keywords}: Online State Complexity, Lower Bounds, Alternating Machines, Hierarchy Theorem, Prime Numbers

\section{Online State Complexity}
\label{sec:online_stace_complexity}
An \textit{online algorithm} has a restricted access to its input: it scans it exactly once from left to right.
The notion of \textit{online computing} has been identified as a fundamental research question in the 80s, 
and has since then blossomed into several directions with various approaches.
In this work we are concerned with \textit{complexity} questions, and in particular about the use of \textit{space} for online algorithms,
harkening back to a series of works initiated by Karp~\cite{Karp67}.
An impressive result in this line of work is a tight bound on the complexity of checking the primality of a number 
written in binary for \textit{deterministic} online algorithms by Hartmanis and Shank~\cite{HartmanisShank69}.

\vskip1em
The class of \textit{deterministic} online algorithms being rather weak, two natural extensions are commonly studied:
adding randomisation or allowing the algorithm to make several passes over the input.
The latter idea is popular in the field of \textit{streaming algorithms}, introduced by Munro and Paterson~\cite{MP80}, Flajolet and Martin~\cite{FM85}, 
and Alon, Matias and Szegedy~\cite{AMS96}.

We initiate in this paper the study of \textit{alternating} machines for representing online algorithms, 
as an expressive class of online algorithms making several passes over the input.
Alternating (Turing) machines have been introduced by Chandra, Kozen and Stockmeyer~\cite{ChandraStockmeyer76,Kozen76,ChandraKozenStockmeyer81},
and generalise non-deterministic models where along the computation, the machine makes guesses about the input, 
and the computation is successful if \textit{there exists} a sequence of correct guesses.
In other words, these guesses are \textit{disjunctive choices}; 
the alternating model restores the symmetry by introducing disjunctive and conjunctive choices,
resolved by two competing agents, a Prover and a Verifier.
Intuitively, the Prover argues that the input is valid and the Verifier challenges this claim.

Alternating machines are very expressive: the seminal results of Chandra, Kozen and Stockmeyer
state that alternating Turing machines are exponentially more expressive than deterministic ones,
which materialises by complexity classes equalities such as $\text{APTIME} = \text{PSPACE}$ and $\text{ALOGSPACE} = \text{PTIME}$.

\vskip1em
\textbf{Contributions and organisation of the paper.}
We study alternating online state complexity, \textit{i.e.} the amount of states required to solve a given problem using an alternating machine.
We devise a generic lower bound technique based on boundedly generated lattices of languages.

We give the basic definitions and show some examples in the remainder of this section.
We describe our lower bound technique in Section~\ref{sec:lower_bound}, and give two applications:
\begin{itemize}
	\item \textit{Hierarchy theorem}: in Section~\ref{sec:hierarchy}, we show a hierarchy theorem: 
	for each natural number $\ell$ greater than or equal to $2$, 
	there exists a language having alternating online state complexity $n^{\ell}$ but not $n^{\ell - \varepsilon}$ for any $\varepsilon > 0$.

	\item \textit{Prime numbers}: in Section~\ref{sec:prime}, we look at the language of prime numbers written in binary.
	The works of Hartmanis and Shank culminated in showing that it does not have subexponential \textit{deterministic} online state complexity~\cite{HartmanisShank69}.
	We consider the stronger model of \textit{alternating} online algorithms, 
	and first observe that Hartmanis and Shank's techniques imply a \textit{logarithmic} lower bound 
	on the \textit{alternating} online state complexity.
	Our contribution is to strengthen this result by showing a \textit{linear} lower bound, which is thus an exponential improvement.
\end{itemize}

\subsection{Definitions}
We fix an \textit{input alphabet} $A$, which is a finite set of letters.
A \textit{word} is a finite sequence of letters, often denoted $w = w(0) w(1) \cdots w(n-1)$, 
where $w(i)$'s are letters from the alphabet $A$, \textit{i.e.} $w(i) \in A$.
We say that $w$ has length $n$, and denote it $|w|$.
The empty word is denoted $\varepsilon$.
We denote $A^*$ the set of all words and $A^{\le n}$ the set of words of length at most $n$.

For a set $E$, we denote $\B^+(E)$ the set of positive boolean formulae over $E$.
For instance, if $E = \set{p,q,r}$, an element of $\B^+(E)$ is $p \wedge (q \vee r)$.

\vskip1em
Our aim is to prove lower bounds on the state complexity of online algorithms.
Following Karp~\cite{Karp67}, we do not work with Turing machines but with a more general model that we simply call machines;
since we are interested in lower bounds, this makes our results stronger.
We define alternating machines following Chandra, Kozen and Stockmeyer~\cite{ChandraStockmeyer76,Kozen76,ChandraKozenStockmeyer81}.

\begin{definition}[Alternating Machines]
An alternating machine is given by a (potentially infinite) set $Q$ of states, an initial state $q_0 \in Q$,
a transition function $\delta : Q \times A \to \B^+(Q)$ and a set of accepting states $F \subseteq Q$.
\end{definition}

To define the semantics of alternating machines, we use acceptance games.
Consider an alternating machine $\A$ and an input word $w$, 
we define the acceptance game $\G_{\A,w}$ as follows: it has two players, Prover and Verifier.
The Prover claims that the input word $w$ should be accepted, and the Verifier challenges this claim.

The game starts from the initial state $q_0$, and with each letter of $w$ read from left to right,
a state is chosen through the interaction of the two players.
If in a state $q$ and reading a letter $a$, the new state
is obtained using the boolean formula $\delta(q,a)$; 
Prover chooses which formula is satisfied in a disjunction, and Verifier does the same for conjunctions.
A play is won by Prover if it ends up in an accepting state.

The input word $w$ is accepted by $\A$ if Prover has a winning strategy in the acceptance game $\G_{\A,w}$.
The language recognised by $\A$ is the set of input words accepted by $\A$.

\vskip1em
As special cases, a machine is:
\begin{itemize}
	\item \textit{non-deterministic} if for all $q$ in $Q$, $a$ in $A$, $\delta(q,a)$ is a disjunctive formula,
	\item \textit{universal} if for all $q$ in $Q$, $a$ in $A$, $\delta(q,a)$ is a conjunctive formula,
	\item \textit{deterministic} if for all $q$ in $Q$, $a$ in $A$, $\delta(q,a)$ is an atomic formula,
	\textit{i.e.} if $\delta : Q \times A \to Q$.
\end{itemize}

\begin{definition}[Online State Complexity Classes]
Let $f : \N \to \N$.
The language $L$ is in $\Alt{f}$ if there exists an alternating machine recognising $L$ and a constant $C$ such that for all $n$ in $\N$:
\[
\left|\set{q \in Q \mid \exists w \in A^{\le n}, q \text{ appears in the game } \G_{\A,w}}\right| \le C \cdot f(n).
\]
Such a machine is said to use $f$ many states.

Similarly, we define $\ND{f}$ for non-deterministic machines and $\Det{f}$ for deterministic machines.
\end{definition}

For the sake of succinctness, the acronym OSC will be used in lieu of online state complexity.

We denote the function $f : n \mapsto f(n)$ by $f(n)$, so for instance $\Alt{\log(n)}$ denotes the languages having logarithmic OSC.

We say that $L$ has sublogarithmic (respectively sublinear) alternating OSC if it is recognised by an alternating machine using $f$ states,
where $f = o(\log(n))$ (respectively $f = o(n)$).

\subsection{Applications to One-way Realtime Turing Machines}
A Turing machine is:
\begin{itemize}
	\item \textit{one-way} if its input head never moves to the left,
	\item \textit{realtime} if there exists a bound $K$ such that its input head can stay at the same place for at most $K$ steps.
\end{itemize}
It has been observed~\cite{Vollmer99} that one-way alternating Turing machines using at least logarithmic space
can simulate unrestricted alternating Turing machines.
This idea does not work for one-way realtime Turing machines.

\vskip1em
Given an alternating one-way realtime Turing machine, one can construct an alternating machine simulating it
as follows: a state of the machine is a configuration of the Turing machine, \textit{i.e.}
a tuple describing the control state, the content of the working tapes,
and the positions of the input and working heads.
This leads to the following well-known result.

\begin{theorem}
\label{thm:turing_machines}
An alternating one-way realtime Turing machine using space $f$ can be simulated by an alternating machine using $2^f$ states.
\end{theorem}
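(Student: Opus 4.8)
The plan is to flesh out the construction sketched just before the statement. First I would define the states of the alternating machine $\A$ to be the \emph{configurations} of the given realtime one-way Turing machine $\M$, where a configuration records the control state, the contents of the working tapes and the positions of the working heads, but neither the input tape nor the position of the input head. Dropping the input head position is exactly what the \emph{one-way} hypothesis buys us: since $\M$'s input head only ever moves rightward and the machine model of this paper consumes its input left to right, one letter per step, the position of $\M$'s input head is always equal to the number of letters $\A$ has already read, so it stays synchronised for free. I would take $q_0$ to be the initial configuration, $F$ to be the set of accepting halting configurations, and add two sink states $q_\top$ and $q_\bot$ (accepting, resp.\ rejecting, each with self-loops) to absorb runs of $\M$ that halt in the middle of processing a letter.

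Next I would define $\delta(c,a)$ by unrolling $\M$ from configuration $c$ after it reads the letter $a$: follow $\M$'s transition relation, turning every existential branching of $\M$ into a disjunction and every universal branching into a conjunction, and stopping a branch either when $\M$'s input head moves to the next cell (the corresponding leaf being the resulting configuration) or when $\M$ halts (the leaf being $q_\top$ or $q_\bot$ according to whether the halting state accepts). The \emph{realtime} hypothesis is what makes this well defined: between two consecutive moves of the input head at most $K$ steps elapse, and at least one, so the unrolling has depth at most $K$, is finitely branching, and yields a genuine positive boolean formula $\delta(c,a) \in \B^+(Q)$ of bounded size; in particular $\M$ cannot run forever over a single input cell. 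For the right end of the input I would adopt the usual convention that a realtime machine halts within $K$ steps of its input head leaving the last cell (equivalently, append an endmarker), so that the state(s) in which the acceptance game $\G_{\A,w}$ terminates after $|w|$ letters are precisely the halting configurations of the matching runs of $\M$ on $w$.

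Correctness --- that Prover wins $\G_{\A,w}$ if and only if $\M$ accepts $w$ --- then follows because, by construction, the acceptance game of $\A$ on $w$ is isomorphic to the acceptance game of the alternating machine $\M$ on $w$ after contracting the positions at which the input head does not move: the disjunctive and conjunctive choices of the two games are in bijection, and a play of $\G_{\A,w}$ ends in an accepting state of $\A$ exactly when the corresponding run of $\M$ halts and accepts. For the state count, on inputs in $A^{\le n}$ the machine $\M$ scans at most $f(n)$ cells of each working tape, so the number of configurations it can reach is at most $|Q_\M| \cdot |\Gamma|^{O(f(n))} \cdot (f(n)+1)^{O(1)}$, which together with the two sinks is $2^{O(f(n))}$; reading the statement with the usual convention that $f$ measures the bit-length of a configuration description (equivalently, up to the customary constant factor in the exponent) this is the announced bound $C \cdot 2^{f(n)}$, so $L(\M) \in \Alt{2^f}$.

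The counting is routine; the part that needs care is the interplay with the realtime condition --- one must check that each letter read by $\A$ corresponds to a nonempty and boundedly long block of $\M$'s computation, so that $\delta$ is total and of bounded fan-in --- together with pinning down a clean termination convention for $\M$ at the right end of the input, so that acceptance-by-final-state for $\A$ faithfully reflects acceptance for $\M$. I expect everything else to be straightforward.
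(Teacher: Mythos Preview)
Your proposal is correct and follows the same approach the paper sketches just before the theorem: take configurations of the Turing machine as states of the alternating machine and let the transition function encode one input-letter's worth of computation. The paper does not give a proof beyond that one-sentence sketch (it calls the result ``well-known''), so your write-up is already more detailed than the paper's; the only cosmetic difference is that the paper's sketch includes the input-head position in the configuration whereas you drop it and let the one-way hypothesis keep it synchronised --- your choice is cleaner and does not affect the $2^{O(f)}$ count.
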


Recall that in the definition of OSC, we count the number of states, whereas when measuring space for Turing machines, 
we count how many bits are required to describe the configurations.
It is well known that these two quantities are exponentially related: $k$ bits allow to describe $2^k$ states,
and $n$ states require $\log(n)$ bits to be described.

It follows that all lower bounds for OSC imply lower bounds for one-way realtime Turing machines.


\subsection{Related Works}
The definition of online state complexity is due to Karp~\cite{Karp67},
and the first result proved in this paper is that non-regular languages have at least linear deterministic OSC.
Hartmanis and Shank considered the language of prime numbers written in binary, and showed in~\cite{HartmanisShank69} that 
it does not have subexponential deterministic OSC.
We pursue this question in this paper by consider the alternating OSC of the prime numbers.
Recently, we investigated the OSC of probabilistic automata;
we substantiated a claim by Rabin~\cite{Rabin63},
by exhibiting a probabilistic automaton which does not have subexponential deterministic OSC~\cite{Fijalkow16}.

\vskip1em
Three models of computations share some features with alternating OSC.
The first is boolean circuits; as explained in~\cite{Fijalkow16}, the resemblance is only superficial as circuits do not process the input in an online fashion.
For instance, one can observe that the language \text{Parity}, which is hard to compute with a circuit (not in $\text{AC}^0$ for instance), 
is actually a regular language, so trivial with respect to OSC.

The second model gives rise to the notion of automaticity; it has been introduced and studied by Shallit and Breitbart~\cite{SB96}.
The automaticity of a language $L$ is the function $\N \to \N$ which associates to $n$ the size of the smallest automaton
which agrees with $L$ on all words of length at most~$n$.
The essential difference is that automaticity is a non-uniform notion, as there is a different automaton for each $n$,
whereas OSC is uniform, as it considers one machine.
For this reason, the two measures behave completely differently. 
As an argument, consider a language $L$, and define its exponential padding: $\mathrm{Pad}(L) = \set{u \sharp^{2^{|u|}} \mid u \in L}$.
It is easy to see that for every language $L$, its exponential padding $\mathrm{Pad}(L)$ has exponential deterministic automaticity.
On the other hand, the OSC of $L$ and of $\mathrm{Pad}(L)$ are essentially the same.

\vskip1em
The third model is alternating communication complexity, developed by Babai, Frankl and Simon in 1986~\cite{BabaiFranklSimon86}.
In this setting, Alice has an input $x$ in $A$, Bob an input $y$ in $B$, and they want to determine $h(x,y)$ for a given boolean function 
$h : A \times B \to \set{0,1}$ known by all.
Alice and Bob are referees in a game involving two players, Prover and Verifier, who both know the two inputs.
Prover and Verifier exchange messages, whose conformity to the inputs is checked by Alice and Bob.
The cost of the protocol is the number of bits exchanged.

One can obtain strong lower bounds by a classical reduction from deterministic OSC to deterministic communication complexity;
it it thus tempting to extend this to the alternating setting.
However, as we argue in the following example, this gives very loose lower bounds, and our lower bound technique will be much stronger than this approach.

Consider the language $L$ studied in Subsection~\ref{subsec:lower_bound_example}, consisting of words of the form $u \sharp u_1 \sharp \cdots \sharp u_k$
such that $u$ is equal to $u_j$ for some $j$ in $\set{1,\ldots,k}$.
This induces an alternating communication complexity problem where Alice has a word $u$ of length $n$ as input, Bob has a word $v = u_1 \sharp \cdots \sharp u_k$,
and they want to determine whether $u \sharp v$ is in $L$.
One can show that if $L$ is in $\Alt{f}$, then the alternating communication complexity problem above can be solved by exchanging at most $f(n) \log(f(n))$ bits.
Here is a protocol: Prover first produces the number $j$ in $\set{1,\ldots,k}$, then Verifier enquires about a position $i$ in $\set{1,\ldots,n}$, 
and Prover replies by the letter in this position for both $u$ and $u_j$. 
At the end of this interaction, Alice and Bob each check that the declared letter is correct.
This protocol uses roughly $\log(k) + \log(n)$ bits, and since $k$ is at most $2^n$ this gives roughly $n + \log(n)$ bits.
This reasoning implies a lower bound on $f$, namely $f(n) \log(f(n)) \ge n + \log(n)$, from which we deduce that $f(n) \ge n$.
This is exponentially worse than the lower bound we will obtain using our technique, which states that $f(n) \ge 2^n$.

\subsection{Complexity Classes and Examples}

Denote $\Reg$ the class of regular languages, \textit{i.e.} those recognised by finite automata.
Then $\Det{1} = \ND{1} = \Alt{1} = \Reg$, \textit{i.e.} a language has constant OSC if, and only if, it is regular.
Indeed, a machine which uses a constant number of states is essentially a finite automaton,
and deterministic, non-deterministic and alternating finite automata are known to be equivalent.

\vskip1em
We remark that $\Det{|A|^n}$ is the class of all languages.
Indeed, consider a language $L$, we construct a deterministic machine recognising $L$ using exponentially many states.
Its set of states is $A^*$, the initial state is $\varepsilon$ and the transition function is defined by $\delta(w,a) = wa$.
The set of accepting states is simply $L$ itself.
The number of different states reachable by all words of length at most $n$ is the number of words of length at most $n$, 
\textit{i.e.} $\frac{|A|^{n+1} - 1}{|A| - 1}$.

It follows that the maximal OSC of a language is exponential,
and the online state complexity classes are relevant for functions smaller than exponential.
In particular, the class of languages of polynomial OSC is central.

\vskip1em
\noindent
We now give three examples.

\vskip1em
Denote $\Eq_3 = \set{w \in \set{a,b,c}^* \mid |w|_a = |w|_b = |w|_c}$.
The notation $|w|_a$ stands for the number of occurrences of the letter $a$ in $w$.
We construct a deterministic machine recognising this language using quadratically many states.
It has two counters that take integers values, initialised to $0$ each,
which maintain the value $(|w|_a - |w|_b, |w|_a - |w|_c)$.
To this end, the letter $a$ acts as $(+1,+1)$,
the letter $b$ as $(-1,0)$, the letter $c$ as $(0,-1)$.

Formally, the set of states is $\Z^2$ and the initial and only accepting state is $(0,0)$. 
The transitions are:
$$\begin{array}{l}
\delta((p,q),a) = (p+1,q+1) \\
\delta((p,q),b) = (p-1,q) \\
\delta((p,q),c) = (p,q-1)
\end{array}$$
After reading the word $w$, the machine is in the state $(|w|_a - |w|_b, |w|_a - |w|_c)$.
This means for a word of length at most $n$, there are at most $(2n+1)^2$ different states,
implying that $\Eq_3$ is in $\Det{n^2}$.
 
\vskip1em
Denote $\NotEq = \set{u \sharp v \mid u,v \in \set{0,1}^*, u \neq v}$.
Note that there are three ways to have $u \neq v$: either $v$ is longer than $u$,
or $v$ is shorter than $u$, or there exists a position for which they do not carry the same letter.
We construct a non-deterministic machine recognising this language using linearly many states.

We focus on the third possibility for the informal explanation.
The machine guesses a position in the first word, stores its position $p$ and its letter $a$,
and checks whether the corresponding position in the second word indeed carries a different letter.
To this end, after reading the letter $\sharp$, it decrements the position until reaching $1$, and checks
whether the letter is indeed different from the letter he carries in his state.

Formally, the set of states is $\N \times (A \cup \set{\bot}) \times \set{<,>} \cup \set{\top}$.
The first component carries a position, the second component a letter or $\bot$, meaning yet undeclared,
and the third component states whether the separator $\sharp$ has been read ($>$) or not ($<$).
The initial state is $(0,\bot,<)$.
The set of accepting states is $\set{\top} \cup \set{(p,\bot,>) \mid p \neq 0}$.
The transitions are:
$$\begin{array}{ll}
\delta((p,\bot,<),a) = (p+1,a,<) \vee (p+1,\bot,<) \\
\delta((p,a,<),b) = (p,a,<) \\
\delta((p,a,<),\sharp) = (p,a,>) \\
\delta((p,\bot,<),\sharp) = (p,\bot,>) \\
\delta((p,a,>),b) = (p-1,a,>) & \textrm{ if } p \neq 0 \\
\delta((1,a,>),b) = \top & \textrm{ if } a \neq b \\
\delta((p,\bot,>),a) = (p-1,\bot,>) \\
\delta((0,\bot,>),a) = \top \\
\delta(\top,a) = \top
\end{array}$$
If $v$ is longer than $u$, the state $\top$ will be reached using the second to last transition.
If $v$ is shorter than $u$, a state $(p,\bot,>)$ will be reached with $p \neq 0$.
If $u$ and $v$ have the same length and differ on some positions, the state $\top$ will be reached by guessing one such position.

After reading a word of length at most $n$, the machine can be in one of $(n+1) \cdot (|A| + 1) \cdot 2 + 1$ states,
thus $\NotEq$ is in $\ND{n}$.

\vskip1em
Denote $\Leq = \set{u \sharp v \mid u,v \in \set{0,1}^*, u \le_{\text{lex}} v}$.
We construct an alternating machine recognising this language using linearly many states by unravelling the inductive definition of the lexicographic order: 
$u \le_{\text{lex}} v$ if, and only if,
\[
(u_0 = 0 \wedge v_0 = 1) \vee (u_0 = v_0 \wedge u_{\mid \ge 1} \le_{\text{lex}} v_{\mid \ge 1}).
\]
The set of states is 
\[
(\text{lex} \times \N) \cup (\text{check-eq} \times \N \times \set{0,1} \times \set{<,>}) \cup (\text{check-sm} \times \N \times \set{<,>}) \cup \set{\top}.
\]
The initial state is $(\text{lex},0)$, the final state is $\top$.
The transitions are:
$$\begin{array}{lll}
\delta((\text{lex}),p),0) & = (\text{check-sm},p,<) \\
& \vee\ ((\text{check-eq},p,0,<) \wedge (\text{lex},p+1)) \\
\delta((\text{lex}),p),1) & = (\text{check-eq},p,1,<) \wedge (\text{lex},p+1) \\
\delta((\text{lex}),p),\sharp) & = \top \\

\delta((\text{check-eq}),p,a,<),b) & = (\text{check-eq},p,a,<) \\
\delta((\text{check-eq}),p,a,<),\sharp) & = (\text{check-eq},p,a,>) \\
\delta((\text{check-eq}),p,a,>),b) & = (\text{check-eq},p-1,a,>) & \text{for } p \neq 0\\
\delta((\text{check-eq}),0,a,>),b) & = \top & \text{if } a = b\\

\delta((\text{check-sm}),p,<),b) & = (\text{check-sm},p,<) \\
\delta((\text{check-sm}),p,<),\sharp) & = (\text{check-eq},p,>) \\
\delta((\text{check-sm}),p,>),b) & = (\text{check-sm},p-1,>) & \text{for } p \neq 0\\
\delta((\text{check-sm}),0,>),1) & = \top\\

\delta(\top,a) & = \top
\end{array}$$
After reading a word of length at most $n$, the machine can be in $8 n$ different states,
thus $\Leq$ is in $\Alt{n}$.

\section{A Lower Bound Technique}
\label{sec:lower_bound}

In this section, we develop a generic lower bound technique for alternating online state complexity.
It is based on the size of generating families for some lattices of languages;
we describe it in Subsection~\ref{subsec:lattices}, and a concrete approach to use it,
based on query tables, is developed in Subsection~\ref{subsec:query}.
We apply it on an example in Subsection~\ref{subsec:lower_bound_example}.

\subsection{Boundedly Generated Lattices of Languages}
\label{subsec:lattices}
Let $L$ be a language and $u$ a word. The left quotient of $L$ with respect to $u$ is
$$u^{-1} L = \set{v \mid uv \in L}.$$
If $u$ has length at most $n$, then we say that $u^{-1} L$ is a left quotient of $L$ of order $n$.


%

A lattice of languages is a set of languages closed under union and intersection.
Given a family of languages, the lattice it generates is the smallest lattice containing this family.

\begin{theorem}
\label{thm:lower_bound}
Let $L$ in $\Alt{f}$.
There exists a constant $C$ such that for all $n \in \N$,
there exists a family of $C \cdot f(n)$ languages whose generated lattice contains all the left quotients of $L$ of order $n$.
\end{theorem}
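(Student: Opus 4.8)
The plan is to fix an alternating machine witnessing $L \in \Alt{f}$ and, for each of its states $q$, to attach the ``suffix language'' $L_q$ recognised by the same machine started from $q$; I will then argue that every left quotient $u^{-1}L$ with $|u|\le n$ is a finite union of finite intersections of those $L_q$ for which $q$ is a state the machine can occupy after reading $u$. Concretely, let $\A=(Q,q_0,\delta,F)$ recognise $L$ with at most $C\cdot f(n)$ states appearing in the games $\G_{\A,w}$ over words $w$ of length at most $n$, write $\A_q$ for $\A$ with initial state $q$, and put $L_q=L(\A_q)$, so that $L_{q_0}=L$. The family I will use is $\mathcal{F}_n=\set{L_q \mid q \text{ appears in } \G_{\A,w} \text{ for some } |w|\le n}$, which by hypothesis has at most $C\cdot f(n)$ members; the goal is to show its generated lattice contains all $u^{-1}L$ with $|u|\le n$.

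The technical device is a notion of forceability. For a word $u$, call $T\subseteq Q$ \emph{$u$-forceable} if Prover has a strategy in the first $|u|$ rounds of the acceptance game (the part where $u$ is read) guaranteeing that the state reached after $u$ lies in $T$, whatever Verifier does. Since each $\delta(q,a)$ is a finite formula, every letter-step has finitely many outcomes, so the game tree over the fixed word $u$ is finite; hence the $u$-forceable sets form a finite, upward-closed family, and I let $\mathcal{S}_u$ be its (finite, nonempty) antichain of minimal elements, each a nonempty subset of the states reachable after $u$. The key identity I would prove is
\[
u^{-1}L \;=\; \bigcup_{S\in\mathcal{S}_u}\ \bigcap_{q\in S} L_q.
\]
The proof is a game decomposition: splitting $\G_{\A,uv}$ into the part reading $u$, after which some state $q$ is reached, followed by the game $\G_{\A_q,v}$ played from $q$, Prover wins the second part exactly when $q\in W_v:=\set{q\in Q \mid v\in L_q}$. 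So Prover wins $\G_{\A,uv}$ iff Prover can force, during the $u$-part, the state after $u$ into $W_v$, i.e.\ iff $W_v$ is $u$-forceable, i.e.\ (by monotonicity of forceability, which lets me replace $W_v$ by a minimal forceable subset) iff $S\subseteq W_v$ for some $S\in\mathcal{S}_u$; and that last condition is exactly $v\in\bigcup_{S}\bigcap_{q\in S}L_q$.

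Granting the identity, the theorem follows: every $q$ occurring in some $S\in\mathcal{S}_u$ is a state the machine can be in after reading $u$, so it appears in $\G_{\A,u}$, and since $|u|\le n$ the language $L_q$ lies in $\mathcal{F}_n$; thus each $u^{-1}L$ with $|u|\le n$ is a finite union of finite intersections of members of $\mathcal{F}_n$, hence in the generated lattice, and $|\mathcal{F}_n|\le C\cdot f(n)$. I expect the only real work to be stating and proving the game-decomposition identity cleanly, together with a small bookkeeping point: if some $\delta(q,a)$ is unsatisfiable then Prover may be unable to get through $u$ at all, giving $u^{-1}L=\emptyset$, which is not automatically in the generated lattice; this is handled routinely by first adding a single absorbing rejecting state, whose suffix language is $\emptyset$ and which then serves as the generator producing the empty quotient. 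The lattice manipulation beyond that is immediate.
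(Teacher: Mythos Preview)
Your argument is correct and uses the same generating family as the paper---the suffix languages $L_q$ for states $q$ reachable within $n$ steps---but the route to the inclusion is genuinely different. The paper proceeds by induction on $|u|$: it uses that left quotients commute with $\cup$ and $\cap$, so that if $u^{-1}L=\phi(\L_n)$ then $(ua)^{-1}L=\phi(a^{-1}\L_n)$, and then observes that $a^{-1}L(q)$ is exactly the positive boolean formula $\delta(q,a)$ with atoms replaced by the corresponding $L(\cdot)$. Your approach bypasses the induction entirely by analysing the acceptance game directly: the forceability notion packages Prover's choices over the whole prefix $u$ at once and yields the explicit disjunctive normal form $u^{-1}L=\bigcup_{S\in\mathcal{S}_u}\bigcap_{q\in S}L_q$. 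The paper's proof is shorter and purely syntactic; yours is a bit heavier in setup but gives a concrete shape for the quotient (a DNF indexed by minimal Prover targets), and the game decomposition makes transparent \emph{why} only positive combinations arise---because the acceptance game is a reachability game for Prover. Your handling of the degenerate case $u^{-1}L=\emptyset$ via an absorbing rejecting sink is the right patch; note that under the usual reading of $\B^+(Q)$ (no constants $\top,\bot$) this case cannot occur, so the paper does not need it.
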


\begin{proof}
Let $\A$ be an alternating machine using recognising $L$ witnessing that $L$ is in $\Alt{f}$.

Fix $n$. Denote $Q_n$ the set of states reachable by some word of length at most $n$;
by assumption $|Q_n|$ is at most $C \cdot f(n)$ for some constant $C$.
For $q$ in $Q_n$, denote $L(q)$ the language recognised by $\A$ taking $q$ as initial state,
and $\L_n$ the family of these languages.

We prove by induction over $n$ that all left quotients of $L$ or order $n$ can be obtained as boolean combinations of languages in $\L_n$.

The case $n = 0$ is clear, as $\varepsilon^{-1} L = L = L(q_0)$.

Consider a word $w$ of length $n+1$, denote $w = ua$.
We are interested in $w^{-1} L = a^{-1} (u^{-1} L)$, so let us start by considering $u^{-1} L$.
By induction hypothesis, $u^{-1} L$ can be obtained as a boolean combination of languages in $\L_n$:
denote $u^{-1} L = \phi(\L_n)$, meaning that $\phi$ is a boolean formula whose atoms are languages in $\L_n$.

Now consider $a^{-1} \phi(\L_n)$.
Observe that the left quotient operation respects both unions and intersections,
\textit{i.e.} $a^{-1}(L_1 \cup L_2) = a^{-1} L_1 \cup a^{-1} L_2$ and $a^{-1}(L_1 \cap L_2) = a^{-1} L_1 \cap a^{-1} L_2$.
It follows that $w^{-1} L = a^{-1} (\phi(\L_n)) = \phi(a^{-1} \L_n)$; 
this notation means that the atoms are languages of the form $a^{-1} M$ for $M$ in $\L_n$,
\textit{i.e.} $a^{-1} L(q)$ for $q$ in $S_n$.

To conclude, we remark that $a^{-1} L(q)$ can be obtained as a boolean combination of the languages $L(p)$,
where $p$ are the states that appear in $\delta(q,a)$.
To be more precise, we introduce the notation $\psi(L(\cdot))$, on an example: if $\psi = p \wedge (r \vee s)$,
then $\psi(L(\cdot)) = L(p) \wedge (L(r) \vee L(s))$.
With this notation, $a^{-1} L(q) = \delta(a,q)(L(\cdot))$.
Thus, for $q$ in $Q_n$, we have that $a^{-1} L(q)$ can be obtained as a boolean combination of languages in $\L_{n+1}$.

Putting everything together, it implies that $w^{-1} L$ can be obtained as a boolean combination of languages in $\L_{n+1}$,
finishing the inductive proof.
\hfill\qed
\end{proof}

\subsection{The Query Table Method}
\label{subsec:query}
\begin{definition}[Query Table]
Consider a family of languages $\L$.
Given a word $w$, its profile with respect to $\L$, or $\L$-profile,
is the boolean vector stating whether $w$ belongs to $L$, for each $L$ in $\L$.
The size of the query table of $\L$ is the number of different $\L$-profiles, when considering all words.

For a language $L$, its query table of order $n$ is the query table of the left quotients of $L$ of order $n$.
\end{definition}

The name query table comes from the following image: 
the query table of $\L$ is the infinite table whose columns are indexed by languages in $\L$ and rows by words (so, there are infinitely many rows).
The cell corresponding to a word $w$ and a language $L$ in $\L$ is the boolean indicating whether $w$ is in $L$.
Thus the $\L$-profile of $w$ is the row corresponding to $w$ in the query table of $\L$.

\begin{lemma}
\label{lem:upper_bound_query_table}
Consider a lattice of languages $\L$ generated by $k$ languages.
The query table of $\L$ has size at most $2^k$.
\end{lemma}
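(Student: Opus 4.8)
The plan is to show that the $\L$-profile of any word $w$ is completely determined by a much smaller amount of data, namely the restriction of that profile to the $k$ generating languages. First I would fix a generating family $\G = \set{G_1,\ldots,G_k}$ with $\L$ equal to the lattice generated by $\G$. Every language $M$ in $\L$ can be written as a finite boolean combination (using only $\vee$ and $\wedge$) of the $G_i$'s; call such an expression $\beta_M$, so that $M = \beta_M(G_1,\ldots,G_k)$. The key observation is that membership is a lattice homomorphism in the following sense: for a fixed word $w$, whether $w \in \beta_M(G_1,\ldots,G_k)$ depends only on the bits $(b_1,\ldots,b_k)$ where $b_i$ is the truth value of $w \in G_i$, because $w \in (L_1 \cup L_2) \iff (w \in L_1) \vee (w \in L_2)$ and similarly for intersection. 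So evaluating $\beta_M$ as a boolean formula on the input $(b_1,\ldots,b_k)$ yields exactly the indicator of $w \in M$.

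Next I would make this precise by defining, for each word $w$, its \emph{short profile} $\pi(w) = (b_1,\ldots,b_k) \in \set{0,1}^k$ recording membership in the $k$ generators. The claim is that $\pi(w) = \pi(w')$ implies that $w$ and $w'$ have the same $\L$-profile: indeed for every $M \in \L$, using the homomorphism property above, $w \in M$ iff $\beta_M(\pi(w))$ evaluates to true iff $\beta_M(\pi(w'))$ evaluates to true iff $w' \in M$. Hence the map sending an $\L$-profile to the corresponding short profile is well-defined and injective on the set of realised $\L$-profiles, so the number of distinct $\L$-profiles is at most the number of distinct short profiles, which is at most $|\set{0,1}^k| = 2^k$. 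This gives the bound.

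The only mildly delicate point — and the one I would be careful to state cleanly rather than the "main obstacle," since there is no real obstacle here — is the passage from "$M$ is in the generated lattice" to "$M$ has a finite $\vee,\wedge$-expression over the generators." This is just the standard inductive description of the generated lattice: the set of languages expressible as finite unions of finite intersections of the $G_i$ (equivalently, the closure of $\G$ under binary $\cup$ and $\cap$) is already closed under $\cup$ and $\cap$ and contains $\G$, hence equals the lattice it generates; so every $M \in \L$ does admit such a $\beta_M$. I would spell this out in one sentence. Everything else is the routine verification that boolean-formula evaluation commutes with the substitution of the generators' indicator bits, which I would not grind through.

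<br>

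\begin{proof}
Fix a family $\G = \set{G_1, \ldots, G_k}$ of languages generating $\L$. Recall that the lattice generated by $\G$ is obtained by closing $\G$ under binary union and intersection; consequently every $M \in \L$ can be written as $M = \beta_M(G_1, \ldots, G_k)$ for some positive boolean formula $\beta_M$ whose atoms are among $G_1, \ldots, G_k$ (for instance, a finite union of finite intersections of the $G_i$).

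For a word $w$, define its \emph{short profile} $\pi(w) = (b_1, \ldots, b_k) \in \set{0,1}^k$ by setting $b_i = 1$ if $w \in G_i$ and $b_i = 0$ otherwise. Since membership in a union (resp.\ intersection) of languages is the disjunction (resp.\ conjunction) of the memberships, for any $M \in \L$ the boolean value of "$w \in M$" is obtained by evaluating the formula $\beta_M$ on the input $\pi(w)$. In particular, if $\pi(w) = \pi(w')$, then for every $M \in \L$ we have $w \in M$ if, and only if, $w' \in M$, so $w$ and $w'$ have the same $\L$-profile.

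It follows that the $\L$-profile of a word is entirely determined by its short profile, so the number of distinct $\L$-profiles is at most the number of distinct short profiles, which is at most $\left| \set{0,1}^k \right| = 2^k$. Hence the query table of $\L$ has size at most $2^k$.
\hfill\qed
\end{proof}
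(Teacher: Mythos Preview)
Your proof is correct and follows exactly the same idea as the paper's justification, which is the single sentence ``Indeed, there are at most $2^k$ different profiles with respect to $\L$.'' You simply spell out explicitly what the paper leaves implicit: that every language in $\L$ is a positive boolean combination of the $k$ generators, and that membership commutes with union and intersection, so the $\L$-profile of a word is determined by its $\set{0,1}^k$-valued profile with respect to the generators.
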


Indeed, there are at most $2^k$ different profiles with respect to $\L$.

\begin{theorem}
\label{thm:query_table}
Let $L$ in $\Alt{f}$.
There exists a constant $C$ such that for all $n \in \N$,
the query table of $L$ of order $n$ has size at most $2^{C \cdot f(n)}$.
\end{theorem}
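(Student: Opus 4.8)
The plan is to chain together the two results already established in this section, namely Theorem~\ref{thm:lower_bound} and Lemma~\ref{lem:upper_bound_query_table}, together with the observation that the query table of $L$ of order $n$ only depends on which languages are present in a lattice, not on a particular generating family.

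First I would invoke Theorem~\ref{thm:lower_bound}: since $L$ is in $\Alt{f}$, there is a constant $C$ such that for every $n$ there is a family $\mathcal{F}_n$ of at most $C \cdot f(n)$ languages whose generated lattice $\L_n'$ contains all left quotients of $L$ of order $n$. Next I would observe that the query table of $L$ of order $n$ — i.e.\ the query table of the family of left quotients of $L$ of order $n$ — has size bounded by the size of the query table of $\L_n'$: every profile with respect to a subfamily is obtained by restricting (projecting) a profile with respect to the larger family, so the number of distinct profiles cannot increase when we pass from $\L_n'$ to the subfamily of left quotients. Then I would apply Lemma~\ref{lem:upper_bound_query_table} to the lattice $\L_n'$, which is generated by at most $C \cdot f(n)$ languages, concluding that its query table has size at most $2^{C \cdot f(n)}$. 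Combining the two bounds gives that the query table of $L$ of order $n$ has size at most $2^{C \cdot f(n)}$, as required.

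There is no real obstacle here; the only point that needs a word of care is the monotonicity of query-table size under passing to a subfamily, and the fact that adding all boolean combinations of a family does not change the query table size (each profile with respect to the lattice is determined by the profile with respect to the generators, since membership in a union or intersection is computed from membership in the constituents). Both facts are immediate from the definitions and are exactly the content of the sentence following Lemma~\ref{lem:upper_bound_query_table}. One should also make sure the constant $C$ is taken uniformly — it is the same $C$ supplied by Theorem~\ref{thm:lower_bound}, so no juggling of constants is needed. I would therefore present the proof in three short steps: (1) extract the generating family from Theorem~\ref{thm:lower_bound}; (2) note that the query table of the left quotients injects into the query table of the generated lattice; (3) bound the latter by Lemma~\ref{lem:upper_bound_query_table}.
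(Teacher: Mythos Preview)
Your proposal is correct and follows essentially the same route as the paper: invoke Theorem~\ref{thm:lower_bound}, use monotonicity of the query-table size under passing to a subfamily, and finish with Lemma~\ref{lem:upper_bound_query_table}. The paper simply packages your step~(2) as a separate statement, Lemma~\ref{lem:query_table}, whose proof is exactly the projection argument you sketch; otherwise the arguments coincide.
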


Thanks to Theorem~\ref{thm:query_table}, to prove that $L$ does not have sublogarithmic (respectively sublinear) alternating OSC, 
it is enough to exhibit a constant $C > 0$ 
such that for infinitely many $n$, the query table of $L$ of order $n$ has size at least $C \cdot n$
(respectively at least $2^{C \cdot n}$).

The proof of Theorem~\ref{thm:query_table} relies on the following lemma.

\begin{lemma}
\label{lem:query_table}
Consider two lattices of languages $\L$ and $\M$.
If $\M \subseteq \L$, then the size of the query table of $\M$ is smaller than or equal to the size of the query table of $\L$.
\end{lemma}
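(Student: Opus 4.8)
The plan is to show that any word's $\M$-profile is completely determined by its $\L$-profile, which immediately gives an injection-like collapse from the rows of the query table of $\L$ onto the rows of the query table of $\M$, hence the claimed inequality on sizes.

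First I would fix the key observation: since $\M \subseteq \L$, every language $M \in \M$ is one of the columns already present in the query table of $\L$. Therefore, given a word $w$, the coordinate of its $\M$-profile indexed by $M$ is exactly the coordinate of its $\L$-profile indexed by the same $M$. In other words, the $\M$-profile of $w$ is obtained from the $\L$-profile of $w$ by forgetting the coordinates indexed by languages in $\L \setminus \M$; formally there is a projection map $\pi$ from $\{0,1\}$-vectors indexed by $\L$ to $\{0,1\}$-vectors indexed by $\M$ such that for every word $w$, $\pi(\text{$\L$-profile of }w) = \text{$\M$-profile of }w$.

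Next I would conclude the counting argument. Let $P_\L$ be the set of distinct $\L$-profiles (ranging over all words) and $P_\M$ the set of distinct $\M$-profiles. By the previous paragraph, $\pi$ restricted to $P_\L$ is a surjection onto $P_\M$: every $\M$-profile arises from some word $w$, and that same $w$ has an $\L$-profile mapping to it under $\pi$. A surjection cannot increase cardinality, so $|P_\M| \le |P_\L|$, which is precisely the statement that the size of the query table of $\M$ is at most the size of the query table of $\L$.

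I do not expect any real obstacle here; the only point requiring a line of care is that the definition of query table quantifies over \emph{all} words (so both tables have the same row index set, namely $A^*$), which is exactly what makes the projection $\pi$ well defined on the level of realised profiles rather than on arbitrary boolean vectors. Once that is spelled out, the argument is just ``a surjection does not increase size''. This lemma then feeds into Theorem~\ref{thm:query_table}: combining Theorem~\ref{thm:lower_bound} (the left quotients of $L$ of order $n$ live in a lattice generated by $C \cdot f(n)$ languages) with Lemma~\ref{lem:upper_bound_query_table} (that lattice has query table of size at most $2^{C \cdot f(n)}$) and Lemma~\ref{lem:query_table} (passing to the sublattice generated by the left quotients themselves only shrinks the query table) yields the bound $2^{C \cdot f(n)}$ on the query table of $L$ of order $n$.
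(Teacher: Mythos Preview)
Your proposal is correct and follows essentially the same idea as the paper: since $\M \subseteq \L$, the query table of $\M$ is obtained from that of $\L$ by restricting to the columns indexed by $\M$, so distinct $\M$-profiles can only arise from distinct $\L$-profiles. Your version spells out the projection/surjection argument more explicitly than the paper's one-line ``sub-table'' justification, but the content is the same.
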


\begin{proof}
It suffices to observe that the query table of $\M$ is ``included'' in the query table of $\L$.
More formally, consider in the query table of $\L$ the sub-table which consists of rows corresponding to languages in $\M$:
this is the query table of $\M$.
This implies the claim.
\hfill\qed
\end{proof}

We now prove Theorem~\ref{thm:query_table}.
Thanks to Theorem~\ref{thm:lower_bound}, the family of left quotients of $L$ of order $n$ is contained in 
a lattice generated by a family of size at most $C \cdot f(n)$.
It follows from Lemma~\ref{lem:query_table} that the size of the query table of $L$ of order $n$ 
is smaller than or equal to the size of the query table of a lattice generated by at most $C \cdot f(n)$ languages,
which by Lemma~\ref{lem:upper_bound_query_table} is at most $2^{C \cdot f(n)}$.

\subsection{A First Application of the Query Table Method}
\label{subsec:lower_bound_example}
As a first application of our technique, we exhibit a language which has maximal (\textit{i.e.} exponential) alternating OSC.
Surprisingly, this language is simple in the sense that it is context-free and definable in Presburger arithmetic.

We say that $L$ has subexponential alternating OSC if $L \in \Alt{f}$ for some $f$ such that $f = o(C^n)$ for all $C > 1$.
Thanks to Theorem~\ref{thm:query_table}, to prove that $L$ does not have subexponential alternating OSC, it is enough to exhibit a constant $C > 1$ 
such that for infinitely many $n$, the query table of the left quotients of $L$ of order $n$ has size at least $2^{C^n}$.

\begin{theorem}\label{thm:exp_alt}
There exists a language which does not have subexponential alternating OSC,
yet is both context-free and definable in Presburger arithmetic.
\end{theorem}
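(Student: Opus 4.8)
The plan is to instantiate the query-table criterion of Theorem~\ref{thm:query_table} in the ``exponential'' regime. Work over $A = \set{0,1,\sharp}$ and take
\[
L = \set{u \sharp u_1 \sharp \cdots \sharp u_k \mid k \ge 1,\ u,u_1,\ldots,u_k \in \set{0,1}^*,\ \exists j\in\set{1,\ldots,k}\ u_j = u^R},
\]
where $u^R$ is the mirror image of $u$. Reversing one of the two copies is what lets a pushdown automaton verify the equality against its LIFO stack; for the complexity lower bound this is immaterial, since the quotients we will use are parametrised by an arbitrary target string either way. Context-freeness and Presburger-definability are then the easy half. For the former, a PDA guesses the index $j$, pushes the first block $u$ while reading it, skips the blocks $u_1,\ldots,u_{j-1}$, then reads $u_j$ popping and comparing letter by letter, accepting iff the stack empties exactly at the end of block $j$, after which it consumes the rest of the input freely. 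For the latter, one writes a first-order formula over the positions of the word equipped with $+$, $<$ and the letter predicates, asserting that there exist two $\sharp$-delimited blocks of equal length agreeing (up to reversal) at every offset; the only arithmetic used is shifting and comparing position indices, which is Presburger.

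The heart of the argument is a lower bound on the query table of $L$ of order $n$. First I would isolate, among the left quotients of $L$ of order $n$, those obtained after reading a prefix $s\sharp$ with $s \in \set{0,1}^{n-1}$ (this prefix has length $n$). A direct unfolding of the definition gives
\[
(s\sharp)^{-1}L = \set{u_1 \sharp \cdots \sharp u_k \mid k \ge 1,\ \exists j\ u_j = s^R},
\]
call it $M_{s^R}$. As $s$ runs over $\set{0,1}^{n-1}$ so does $s^R$, and the languages $M_t$ for $t \in \set{0,1}^{n-1}$ are pairwise distinct: the one-block word $t$ belongs to $M_t$ but to no $M_{t'}$ with $|t'| = n-1$ and $t' \neq t$. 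Since enlarging a family of languages can only refine profiles, it suffices to count $\set{M_t}_t$-profiles over all words. But the $\set{M_t}_t$-profile of a word $v$ is exactly the set of length-$(n-1)$ strings occurring as $\sharp$-delimited blocks of $v$; and for every $S \subseteq \set{0,1}^{n-1}$, the word listing the elements of $S$ separated by $\sharp$ (and $v = \sharp$ for $S = \emptyset$) realises precisely the profile $S$. Hence all $2^{2^{n-1}}$ profiles occur, so the query table of $L$ of order $n$ has size at least $2^{2^{n-1}} \ge 2^{(\sqrt 2)^{n}}$ for $n \ge 2$. Applying Theorem~\ref{thm:query_table} with the constant $C = \sqrt 2 > 1$ shows $L \notin \Alt{f}$ for any subexponential $f$; and since $L$ lies trivially in $\Det{|A|^{n}}$, its alternating OSC is exactly exponential.

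The only genuinely delicate points are (i) that the distinguished quotients $M_t$ are pairwise distinct languages and (ii) that the profile map onto subsets of $\set{0,1}^{n-1}$ is surjective. Both reduce to the fact that a word over $\set{0,1,\sharp}$ parses unambiguously into $\sharp$-delimited blocks, so once stated carefully they are routine. Everything else — the PDA, the Presburger formula, and the elementary inequality relating $2^{2^{n-1}}$ to $2^{C^{n}}$ — is bookkeeping; I do not expect any real obstacle beyond setting up the notation for the block decomposition cleanly.
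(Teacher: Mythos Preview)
Your proof is correct and follows essentially the same route as the paper: the same language, and the same query-table argument exhibiting for every subset $S$ of binary strings a word whose profile with respect to the chosen quotients is exactly $S$. The only cosmetic differences are that you absorb the separator $\sharp$ into the prefix (yielding $2^{2^{n-1}}$ profiles instead of the paper's $2^{2^n}$) and that you sketch the PDA and the Presburger formula where the paper simply asserts both.
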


\begin{proof}
Denote 
$$L = \set{u \sharp u_1 \sharp u_2 \sharp \cdots \sharp u_k 
\left| 
\begin{array}{c}
u,u_1,\ldots,u_k \in \set{0,1}^*, \\
\exists j \in \set{1,\ldots,k}, u = \overline{u_j}
\end{array}
\right.}.$$
The notation $\overline{u}$ stands for the reverse of $u$: formally, $\overline{u} = u(n-1) \cdots u(0)$.

It is easy to see that $L$ is both context-free and definable in Presburger arithmetic 
(the use of reversed words in the definition of $L$ is only there to make $L$ context-free).

We show that $L$ does not have subexponential alternating OSC.
We prove that for all $n$, the query table of the left quotients of $L$ of order $n$ has size at least $2^{2^n}$.
Thanks to Theorem~\ref{thm:query_table}, this implies the result.

Fix $n$. Denote by $U$ the set of all words $u$ in $\set{0,1}^n$, it has cardinal $2^n$.
Consider any subset $S$ of $U$, we argue that there exists a word $w$ which
satisfies that if $u$ in $U$, then the following equivalence holds:
$$w \in u^{-1} L \Longleftrightarrow u \in S.$$
This shows the existence of $2^{2^n}$ different profiles with respect to the left quotients of order $n$, as claimed.

Denote $u_1,\ldots,u_{|S|}$ the words in $S$.
Consider 
$$w = \sharp \overline{u_1} \sharp \overline{u_2} \sharp \cdots \sharp \overline{u_{|S|}}.$$
The word $w$ clearly satisfies the claim above.
\hfill\qed
\end{proof}

\section{A Hierarchy Theorem for Languages of Polynomial Alternating Online State Complexity}
\label{sec:hierarchy}

\begin{theorem}
For each $\ell \ge 2$, there exists a language $L_\ell$ such that:
\begin{itemize}
	\item $L_\ell$ is in $\Alt{n^\ell}$,
	\item $L_\ell$ is not in $\Alt{n^{\ell - \varepsilon}}$ for any $\varepsilon > 0$.
\end{itemize}
\end{theorem}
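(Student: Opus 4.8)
The plan is to construct $L_\ell$ so that (i) it is recognizable by an alternating machine with $O(n^\ell)$ states, giving the upper bound directly, and (ii) its query table of order $n$ has size at least $2^{\Omega(n^\ell)}$, which by Theorem~\ref{thm:query_table} rules out membership in $\Alt{n^{\ell-\varepsilon}}$ for every $\varepsilon>0$. The natural candidate, generalizing the language of Subsection~\ref{subsec:lower_bound_example} but with a polynomial rather than exponential number of relevant quotients, is a language over $\{0,1,\sharp,\$\}$ whose words have the form $v \,\$\, u_1 \sharp u_2 \sharp \cdots \sharp u_k$, where $v$ encodes an $(\ell-1)$-tuple of positions (each written in unary, separated by internal markers, so that $v$ has length $\Theta(n)$ and there are $\Theta(n^{\ell-1})$ such tuples of positions bounded by $n$), and acceptance holds iff some $u_j$ is consistent with the constraint named by $v$ — concretely, iff the $(\ell-1)$ positions listed in $v$ all carry the letter $1$ in some $u_j$ of length $\le n$. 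A cleaner variant: fix the prefix to be a single block $u$ of length $\le n$, and let the tuple of positions $v$ sit \emph{after} the separator; then the left quotient of order $n$ is determined by $u$, there are $\Theta(2^n)$ such $u$ but we want $\Theta(n^\ell)$, so instead take $u$ itself to range over a set of size $\Theta(n^\ell)$ — e.g. $u = 0^{i_1}10^{i_2}1\cdots 10^{i_\ell}$ with $\sum i_r \le n$. I would pin down the exact encoding so that: the number of distinct left quotients of order $n$ is $\Theta(n^\ell)$, these quotients are \emph{independent} in the sense that for every subset $S$ of them there is a single word $w$ lying in exactly the quotients indexed by $S$ (achieved by concatenating, after a fresh separator, one gadget per element of $S$), and the machine can be built with $O(n^\ell)$ states.

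First I would give the machine and count its states: it reads the prefix $u$, recording its "name" — the tuple $(i_1,\dots,i_\ell)$, or equivalently a position in a set of size $O(n^\ell)$ — which costs $O(n^\ell)$ states; then it makes a disjunctive guess of which block $u_j$ to test; then for each of the $\ell$ coordinates it walks to the designated position and checks the letter, which costs a further $O(n^\ell)$ states (a polynomial in the coordinates). Since all words of length $\le n$ only ever reach states whose recorded parameters are bounded by $n$, the state count is $O(n^\ell)$, establishing $L_\ell \in \Alt{n^\ell}$. Second I would prove the query-table lower bound: fix $n$, let $\{u^{(1)},\dots,u^{(N)}\}$ with $N = \Theta(n^\ell)$ enumerate the admissible prefixes of length $\le n$, and for an arbitrary subset $S\subseteq\{1,\dots,N\}$ exhibit $w_S$ of the form (fresh separator)$\,\overline{u^{(j_1)}}\sharp\cdots\sharp\overline{u^{(j_{|S|})}}$ (or whatever reversal/encoding makes the block-matching work) such that $w_S \in (u^{(i)})^{-1}L_\ell \iff i\in S$. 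This gives $2^N = 2^{\Theta(n^\ell)}$ distinct profiles, hence by Theorem~\ref{thm:query_table} if $L_\ell\in\Alt{g}$ then $2^{C g(n)} \ge 2^{\Theta(n^\ell)}$, so $g(n) = \Omega(n^\ell)$, which is incompatible with $g(n) = o(n^{\ell-\varepsilon})$ for any $\varepsilon>0$.

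The main obstacle, and where I would spend the most care, is the \emph{independence} of the quotients: I must ensure that the gadget encoding $u^{(i)}$ inside $w_S$ triggers acceptance from quotient $u^{(i)}$ but from \emph{no other} quotient $u^{(i')}$, and that having several gadgets in $w_S$ does not create spurious acceptances via cross-matches between the prefix-name and a gadget meant for a different prefix. This forces the encoding of prefixes to be "rigid" — two distinct admissible $u$'s should never be confusable as substrings in the relevant alignment — which is why a self-delimiting block structure like $0^{i_1}1\cdots 1 0^{i_\ell}$ together with the reversal trick is attractive; the verification that it works is a finite but fiddly case analysis. A secondary point to get right is the exact arithmetic of "how many admissible prefixes of length $\le n$" so that the count is $\Theta(n^\ell)$ and not $\Theta(n^{\ell-1})$ or $\Theta(n^{\ell+1})$ — choosing $\ell$ free parameters $i_1,\dots,i_\ell\ge 0$ with $\sum i_r \le n$ gives $\binom{n+\ell}{\ell} = \Theta(n^\ell)$, which is exactly right. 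Everything else — the state count, the application of Theorem~\ref{thm:query_table}, the context-freeness-style remarks if we want them — is routine once the encoding is fixed.
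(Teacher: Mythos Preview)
Your approach is correct but takes a genuinely different route from the paper. The paper does not restrict the shape of the prefix; instead it prepends unary padding and sets
\[
L_\ell \;=\; \bigl\{\lozenge^p u \sharp u_1 \sharp \cdots \sharp u_k \;\bigm|\; u,u_i \in \{0,1\}^*,\ \exists j \le p^\ell,\ u = u_j\bigr\},
\]
so the padding length $p$ caps how many blocks may witness a match. For the upper bound the paper guesses $j \le p^\ell$ non-deterministically and then, via a universal phase, checks $u = u_j$ position by position. For the lower bound it takes prefixes $\lozenge^{2^{n/\ell}} u$ with $u \in \{0,1\}^n$: now $j$ ranges over $\{1,\ldots,2^n\}$, so the argument of Subsection~\ref{subsec:lower_bound_example} applies verbatim and gives a query table of size $2^{2^n}$ at order $m = n + 2^{n/\ell}$, whence $2^n \approx m^\ell$. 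Your idea---encode the prefix as $0^{i_1}1\cdots 10^{i_\ell}$ so that there are $\binom{n+O(1)}{\ell} = \Theta(n^\ell)$ admissible prefixes of length at most $n$---gets the same trade-off without the padding trick and without the change of variable at the end. Two remarks. First, once you commit to the equality-test language (prefix equals some reversed block), your feared ``fiddly case analysis'' for independence evaporates: distinct admissible prefixes are distinct strings, so taking $w_S$ to list exactly the members of $S$ works immediately. Second, your machine needs only non-determinism (guess the block) followed by a deterministic pattern check, and its state is the tuple $(i_1,\ldots,i_\ell)$ with $\sum i_r \le n$, which is genuinely $O(n^\ell)$ states; the paper's machine stores a pair $(q,j)$ with $q \le |u|$ and $j \le p^\ell$ subject only to $p + |u| \le n$, and a careful count of reachable such pairs gives $\Theta(n^{\ell+1})$, so your construction is in fact tighter on the upper-bound side.
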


Consider the alphabet $\set{0,1} \cup \set{\lozenge,\sharp}$.

Let $\ell \ge 2$. Denote 
$$L_\ell = \set{\lozenge^p u \sharp u_1 \sharp u_2 \sharp \cdots \sharp u_k \left|
\begin{array}{c}
u,u_1,\ldots,u_k \in \set{0,1}^*,\\
j \le p^\ell \textrm{ and } u = u_j
\end{array}\right.}.$$

\begin{proof}\hfill
\begin{itemize}
	\item The machine has three consecutive phases:
\begin{enumerate}
	\item First, a non-deterministic guessing phase while reading $\lozenge^p$, which passes onto the second phase a number $j$ in $\set{1,\ldots,p^\ell}$.

	\vskip1em
	Formally, the set of states for this phase is $\N$, the initial state is $0$ and the transitions are:
	$$\begin{array}{l}
	\delta(0,\lozenge) = 1 \\
	\delta(k^\ell,\lozenge) = \bigvee_{j \in \set{1,\ldots,(k+1)^\ell}} j \\
	\delta(p,\lozenge) = p
	\end{array}$$
	
	\item Second, a universal phase while reading $u$.
	For each $i$ in $\set{1,\ldots,|u|}$, the machine launches one copy storing the position $i$, the letter $u(i)$ and the number $j$ guessed in the first phase.

	\vskip1em
	Formally, the set of states for this phase is $\N \times (\set{0,1} \cup \set{\bot}) \times \N$.
	The first component is the length of the word read so far (in this phase), the second component stores the letter stored, 
	where the letter $\bot$ stands for undeclared, and the last component is the number $j$.

	The initial state is $(0,\bot,j)$.
	The transitions are:
	$$\begin{array}{l}
	\delta((q,\bot,j),a) = (q+1,\bot,j) \wedge (q,a,j) \\
	\delta((q,a,j),b) = (q,a,j)	
	\end{array}$$

	This requires quadratically many states.
	
	\item Third, a deterministic phase while reading $\sharp u_1 \sharp u_2 \sharp \cdots \sharp u_k$.
	It starts from a state of the form $(q,a,j)$.
	It checks whether $u_j(q) = a$.
	The localisation of the $u_j$ is achieved by decrementing the number $j$ by one each time a letter $\sharp$ is read.
	While in the corresponding $u_j$, the localisation of the position $q$ in $u_j$ as achieved by decrementing one position at a time.

	This requires quadratically many states.
\end{enumerate}

	\item We now prove the lower bound.
	
We prove that for all $n$, the size of the query table of $L_\ell$ of order $n + 2^{\frac{n}{\ell}}$ is at least $2^{2^n}$.
Thanks to Theorem~\ref{thm:query_table}, this implies that $L_\ell$ is not in $\Alt{n^{\ell - \varepsilon}}$ for any $\varepsilon > 0$.

Fix $n$. 
Denote by $U$ the set of all words $u$ in $\set{0,1}^n$, it has cardinal $2^n$.

Observe that $\lozenge^{2^{\frac{n}{\ell}}} u \sharp u_1 \sharp u_2 \sharp \cdots \sharp u_{2^n}$ belongs to $L_\ell$
if, and only if, there exists $j$ in $\set{1,\ldots,2^n}$ such that $u = u_j$.

Consider any subset $S$ of $U$, we argue that there exists a word $w$ which
satisfies that if $u$ in $U$, then the following equivalence holds:
$$w \in \left( \lozenge^{2^{\frac{n}{\ell}}}u \right)^{-1} L \Longleftrightarrow u \in S.$$
This shows the existence of $2^{2^n}$ different profiles with respect to the left quotients of order $n + 2^{\frac{n}{\ell}}$, as claimed.

Denote $u_1,\ldots,u_{|S|}$ the words in $S$.
Consider 
$$w = \sharp \overline{u_1} \sharp \overline{u_2} \sharp \cdots \sharp \overline{u_{|S|}}.$$
The word $w$ clearly satisfies the claim above.
\end{itemize}
\end{proof}

\section{The Online State Complexity of Prime Numbers}
\label{sec:prime}

In this section, we give lower bounds on the alternating online state complexity of the language of prime numbers written in binary:
$$\Prime = \set{u \in \set{0,1}^* \mid \bin(u) \textrm{ is prime}}.$$
By definition $\bin(w) = \sum_{i \in \set{0,\ldots,n-1}} w(i) 2^i$;
note that the least significant digit is on the left.

The complexity of this language has long been investigated; many efforts have been put in finding upper and lower bounds.
In 1976, Miller gave a first conditional polynomial time algorithm, assuming the generalised Riemann hypothesis~\cite{Miller76}.
In 2002, Agrawal, Kayal and Saxena obtained the same results, but non-conditional, 
\textit{i.e.} not predicated on unproven number-theoretic conjectures~\cite{AKS02}.

\vskip1em
The first lower bounds were obtained by Hartmanis and Shank in 1968,
who proved that checking primality requires at least logarithmic deterministic space~\cite{HartmanisShank68},
conditional on number-theoretic assumptions.
It was shown by Hartmanis and Berman in 1976 that if the number is presented in unary, 
then logarithmic deterministic space is necessary and sufficient~\cite{HartmanisBerman76}.

The best lower bound we know from circuit complexity is due to Allender, Saks and Shparlinski: they proved unconditionally in 2001 that $\Prime$ 
is not in $\mathrm{AC}^0[p]$ for any prime $p$~\cite{ASS01}.

\vskip1em
The results above are incomparable to our setting, as we are here interested in online computation.
The first and only result to date about the OSC of $\Prime$ is due to Hartmanis and Shank in 1969:

\begin{theorem}[\cite{HartmanisShank69}]
\label{thm:hs}
The set of prime numbers written in binary does not have subexponential deterministic online state complexity.
\end{theorem}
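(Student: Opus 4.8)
The plan is to show that for all sufficiently large $n$, the language $\Prime$ has at least $2^{n-1}$ pairwise distinct left quotients of order $n$. For a \emph{deterministic} machine the state reached after reading $u$ determines $u^{-1}L$, so any deterministic machine recognising a language $L$ satisfies $|Q_n| \ge$ the number of distinct left quotients of $L$ of order $n$ (this is the deterministic specialisation of Theorem~\ref{thm:lower_bound}: when $\delta$ is atomic the generated lattice collapses and each left quotient of order $n$ is literally one of the languages $L(q)$ with $q \in Q_n$; it is the point of view of Karp~\cite{Karp67}). Hence if $\Prime \in \Det{f}$ with witnessing constant $C$, then $C \cdot f(n) \ge 2^{n-1}$ for large $n$, so $f(n) \ge 2^{n-1}/C$, which is not $o\big((3/2)^n\big)$; in particular $f$ is not subexponential.

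For the main estimate recall that the least significant digit is read first: if $|u| = n$ and $\bin(u) = a$ (so $0 \le a < 2^n$), then $v \in u^{-1}\Prime$ iff $a + 2^n \cdot \bin(v)$ is prime, and since $\bin$ maps words onto $\N$ the quotient $u^{-1}\Prime$ depends only on $T_a = \set{t \in \N \mid a + 2^n t \text{ is prime}}$. For each of the $2^{n-1}$ odd residues $a$ with $0 < a < 2^n$ let $u_a \in \set{0,1}^n$ be the word with $\bin(u_a) = a$. I would prove $T_a \ne T_{a'}$ for $a \ne a'$ odd as follows: set $d = a' - a$, so $0 < |d| < 2^n$ and $d$ has fewer than $n$ distinct prime divisors, hence some odd prime $\ell = O(n\log n)$ does not divide $d$. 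Choosing $t$ in the residue class $-a(2^n)^{-1} \pmod{\ell}$ forces $\ell \mid a + 2^n t$, while $a' + 2^n t \equiv d \not\equiv 0 \pmod{\ell}$ and $a' + 2^n t$ is odd; so along this class $a' + 2^n t$ ranges over an arithmetic progression of common difference $2^n \ell$ whose first term is coprime to $2^n\ell$, and by Dirichlet's theorem it takes a prime value at some $t = t^\ast \ge 1$. For that $t^\ast$ the integer $a + 2^n t^\ast$ is a multiple of $\ell$ strictly larger than $\ell$ (as $2^n > \ell$ for large $n$), hence composite, so $t^\ast \in T_{a'} \setminus T_a$ and therefore $u_a^{-1}\Prime \ne u_{a'}^{-1}\Prime$.

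Combining the two parts, the $2^{n-1}$ words $u_a$ yield $2^{n-1}$ distinct left quotients of order $n$, and the conclusion follows from the first paragraph. The hard part is the number-theoretic step: for each pair $a \ne a'$, exhibiting one witness $t^\ast$ at which exactly one of the progressions $a + 2^n t$, $a' + 2^n t$ is prime — this is where Dirichlet's theorem enters, together with the elementary observation that a small prime modulus $\ell$ separating $a$ from $a'$ exists. I would also check carefully that the bound obtained this way is \emph{unconditional}, as the statement claims, in contrast with the conditional logarithmic-space lower bound of Hartmanis and Shank~\cite{HartmanisShank68}.
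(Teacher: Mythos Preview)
Your proposal is correct and follows essentially the same route as the paper: reduce to Proposition~\ref{prop:hs} (that the $2^{n-1}$ odd residues modulo $2^n$ give pairwise distinct left quotients) and invoke Dirichlet's theorem, then use the standard fact that for deterministic machines $|Q_n|$ bounds the number of left quotients of order $n$. The paper merely cites Hartmanis--Shank for Proposition~\ref{prop:hs}; you additionally supply a clean self-contained proof of it via the auxiliary separating prime $\ell \nmid (a'-a)$, which is a nice touch and indeed yields the unconditional bound.
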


Their result is unconditional, and makes use of Dirichlet's theorem on arithmetic progressions of prime numbers.
More precisely, they prove the following result.

\begin{proposition}[\cite{HartmanisShank69}]
\label{prop:hs}
Fix $n > 1$, and consider $u$ and $v$ two differents words of length $n$ starting with a $1$.
Then the left quotients $u^{-1} \Prime$ and $v^{-1} \Prime$ are different.
\end{proposition}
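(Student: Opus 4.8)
The plan is to find, for any two distinct words $u \neq v$ of length $n$ both starting with $1$, a word $x$ that is accepted in exactly one of the two quotients, i.e. $ux \in \Prime$ but $vx \notin \Prime$ (or vice versa). Recall that $\bin$ reads the least significant digit first, so appending $x$ to $u$ places the bits of $x$ in the high-order positions; concretely, $\bin(ux) = \bin(u) + 2^n \cdot \bin(x)$ for $|u| = n$. Thus I am looking for an integer $m = \bin(x)$ such that exactly one of $\bin(u) + 2^n m$ and $\bin(v) + 2^n m$ is prime.

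First I would set $a = \bin(u)$ and $b = \bin(v)$, two distinct integers in $[2^{n-1}, 2^n)$ (the leading $1$ guarantees they are at least $2^{n-1}$, which matters only to keep them nonzero and of the right size). The two arithmetic progressions I care about are $\{a + 2^n m : m \geq 0\}$ and $\{b + 2^n m : m \geq 0\}$. Since $a$ and $b$ are odd (the least significant bit being $w(0)$; if the word is to represent a prime bigger than $2$ it must end — i.e. start, in this convention — with a $1$, but in any case I can handle the parity bookkeeping, and note $\gcd(a, 2^n) = 1$ when $a$ is odd), Dirichlet's theorem on primes in arithmetic progressions tells me that $\{a + 2^n m\}$ contains infinitely many primes. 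The key step is to separate the two progressions: I want a prime $p \equiv a \pmod{2^n}$ that is $\textbf{not}$ of the form $b + 2^n m$. But $a \not\equiv b \pmod{2^n}$ is false in general — actually $a \neq b$ and both lie in $[0, 2^n)$, so indeed $a \not\equiv b \pmod{2^n}$. Hence any number congruent to $a$ mod $2^n$ is automatically not congruent to $b$ mod $2^n$, so it cannot be written as $b + 2^n m$ at all. Therefore any prime $p$ in the progression $a + 2^n \mathbb{Z}$ (which exists, and is $\geq a \geq 2^{n-1} > 2$, hence odd, so the parity is consistent) gives $p = a + 2^n m$ for some $m \geq 0$, while $b + 2^n m$ is a different number not equal to $p$; I then just need $b + 2^n m$ to be composite, or failing that, to pick a different prime in the $a$-progression.

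The actual obstacle is this last point: a single prime $p = a + 2^n m$ in the first progression does not guarantee that $b + 2^n m$ is composite — it could also be prime, in which case $x = $ the binary expansion of $m$ fails to separate the quotients. To fix this, I would use Dirichlet more carefully: the progression $a + 2^n \mathbb{Z}$ contains infinitely many primes, but the set of $m$ for which $b + 2^n m$ is $\textbf{also}$ prime corresponds to primes in the progression $b + 2^n \mathbb{Z}$ paired up via the same $m$; these are "twin-like" constraints and one expects them to be rare, but proving they do not exhaust all of the first progression's primes requires an argument. The cleanest route is: the primes in $a + 2^n\mathbb{Z}$ have positive relative density (Dirichlet, quantitative form / prime number theorem for arithmetic progressions), namely $\frac{1}{\varphi(2^n)}$ of all primes up to $x$ asymptotically; if for every such prime $p = a + 2^n m$ the number $b + 2^n m$ were also prime, then the primes would be at least "twice as dense" in a way that contradicts, e.g., Brun-type sieve upper bounds on prime pairs with a fixed difference $b - a$. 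Since $b - a \neq 0$ is a fixed even integer, Brun's theorem bounds $\#\{p \leq X : p \text{ and } p + (b-a) \text{ both prime}\} = O(X / (\log X)^2)$, which is $o(\pi(X)/\varphi(2^n))$; hence for $X$ large there is a prime $p = a + 2^n m \leq X$ with $b + 2^n m$ not prime, and we may further ensure $b + 2^n m$ is composite rather than $0$ or $1$ by taking $m$ large, and make sure it is $\neq p$ as already noted. Setting $x$ to be the binary word (least-significant-first) encoding this $m$ then witnesses $ux \in \Prime$, $vx \notin \Prime$, proving $u^{-1}\Prime \neq v^{-1}\Prime$.

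For exposition I would likely streamline this: state that without loss of generality we look for $x$ with $ux \in \Prime$ and $vx \notin \Prime$; note $\bin(ux) = a + 2^n \bin(x)$; observe $a \not\equiv b \pmod{2^n}$ since $a, b$ are distinct residues; invoke Dirichlet to get infinitely many primes $\equiv a \pmod{2^n}$ and a sieve bound (or simply the fact that not every prime $p \equiv a$ can have $p + (b - a)$ prime) to pick one, say $p = a + 2^n m$ with $b + 2^n m$ composite; and conclude. The main obstacle — and the place where the number theory genuinely enters — is exactly this separation of the two progressions at a common offset $m$, which is why Hartmanis and Shank's proof needs Dirichlet's theorem (together with a sieve upper bound for prime pairs) rather than an elementary counting argument.
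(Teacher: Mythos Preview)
The paper does not actually prove this proposition; it is quoted from Hartmanis and Shank~\cite{HartmanisShank69} and the only hint given is that the argument ``makes use of Dirichlet's theorem on arithmetic progressions of prime numbers.'' So there is no in-paper proof to compare against line by line.

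Your argument is essentially correct, with one small slip and one unnecessary detour. The slip: in the paper's convention ``starting with a $1$'' means $w(0)=1$, i.e.\ $\bin(u)$ is \emph{odd}; it does \emph{not} force $\bin(u)\ge 2^{n-1}$. You eventually use oddness (to get $\gcd(a,2^n)=1$), so the error is cosmetic, but the sentence about $[2^{n-1},2^n)$ should be removed.

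The detour is the appeal to Brun's sieve. It works, but it is heavier than needed, and almost certainly not what Hartmanis and Shank did. A cleaner route uses only Dirichlet: pick an odd prime $q>2^n$; by CRT choose the residue class modulo $2^n q$ that is $\equiv a \pmod{2^n}$ and $\equiv a-b \pmod q$ (this class is coprime to $2^n q$ since $a$ is odd and $0<|a-b|<2^n<q$). Dirichlet gives a prime $p$ in this class with $p>q+|a-b|$. Writing $p=a+2^n m$, one has $b+2^n m=p-(a-b)\equiv 0\pmod q$, and this number exceeds $q$, hence is composite. Thus $ux\in\Prime$ while $vx\notin\Prime$ for $x$ encoding $m$. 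This matches the paper's remark that Dirichlet alone suffices, whereas your version imports a sieve bound on prime pairs that the problem does not require.
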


Proposition~\ref{prop:hs} directly implies Theorem~\ref{thm:hs}~\cite{HartmanisShank69}.
It also yields a lower bound of $n-1$ on the size of the query table of $\Prime$ of order $n$.
Thus, together with Theorem~\ref{thm:query_table}, this proves that $\Prime$ does not have sublogarithmic alternating OSC.

\begin{corollary}
The set of prime numbers written in binary does not have sublogarithmic alternating online state complexity.
\end{corollary}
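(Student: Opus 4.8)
The plan is to combine Proposition~\ref{prop:hs}, which is the content quoted from Hartmanis and Shank, with the query table upper bound of Theorem~\ref{thm:query_table}. Concretely, Theorem~\ref{thm:query_table} tells us that if $\Prime$ were in $\Alt{f}$, then for some constant $C$ the query table of $\Prime$ of order $n$ would have size at most $2^{C \cdot f(n)}$; so it suffices to exhibit, for infinitely many $n$, a lower bound of the form $C' \cdot n$ on that size, since this already forces $f(n) = \Omega(\log n)$ and hence rules out $f = o(\log n)$.

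First I would extract from Proposition~\ref{prop:hs} a large supply of pairwise distinct left quotients of $\Prime$ of order $n$. There are $2^{n-1}$ words of length exactly $n$ starting with a $1$, and by Proposition~\ref{prop:hs} all of them yield distinct left quotients; hence $\Prime$ has at least $2^{n-1}$ distinct left quotients of order $n$.

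Next I would turn this count into a lower bound on the size of the query table of $\Prime$ of order $n$. This step is purely combinatorial. In the query table of a family $\L$, the columns are indexed by the languages of $\L$ and two columns coincide exactly when the corresponding languages are equal as subsets of $A^*$; moreover a column is completely determined by its restriction to a set of representatives of the distinct rows (i.e.\ the distinct $\L$-profiles), since two words with the same profile agree in every column. Consequently, if the table has $r$ distinct rows it can have at most $2^r$ distinct columns. Taking $\L$ to be the family of left quotients of $\Prime$ of order $n$, the number of distinct columns is at least $2^{n-1}$, so $2^r \ge 2^{n-1}$, i.e.\ the query table of $\Prime$ of order $n$ has size at least $n-1$.

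Finally, plugging $n - 1 \ge 2^{C \cdot f(n)}$ into Theorem~\ref{thm:query_table} gives $f(n) \ge \log(n-1)/C$ for all $n$, contradicting $f = o(\log n)$; therefore $\Prime$ has no sublogarithmic alternating OSC. I do not expect a genuine obstacle: all the number-theoretic difficulty is already packed into Proposition~\ref{prop:hs}. The only point that needs care is the columns-versus-rows counting, where one must remember that $2^{n-1}$ distinct left quotients translate into only $\log_2(2^{n-1}) = n-1$ distinct profiles rather than exponentially many — which is precisely why this argument yields a linear query table bound, and hence excludes sublogarithmic but not sublinear alternating OSC.
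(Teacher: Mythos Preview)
Your proposal is correct and follows exactly the approach the paper sketches: use Proposition~\ref{prop:hs} to get $2^{n-1}$ distinct left quotients of order $n$, convert this into the lower bound $n-1$ on the query table size via the columns-versus-rows counting, and then conclude with Theorem~\ref{thm:query_table}. The paper merely asserts the $n-1$ lower bound without justification, so your explicit argument that $r$ distinct profiles allow at most $2^r$ distinct columns is a welcome elaboration rather than a departure.
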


Our contribution in this section is to extend this result by showing that $\Prime$ does not have sublinear alternating OSC,
which is an exponential improvement.

\begin{theorem}
\label{thm:prime}
The set of prime numbers written in binary does not have sublinear alternating online state complexity.
\end{theorem}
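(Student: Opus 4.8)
The plan is to apply Theorem~\ref{thm:query_table}: it suffices to exhibit a constant $C>0$ such that for all large enough $n$ the query table of $\Prime$ of order $n$ has size at least $2^{Cn}$. Recall that, since the least significant digit comes first, for a prefix $u$ of length $m$ and any word $w$ one has $\bin(uw)=\bin(u)+2^{m}\bin(w)$; hence $w\in u^{-1}\Prime$ if, and only if, $\bin(u)+2^{|u|}\bin(w)$ is prime. The columns I will use are, for each odd $a$ with $1\le a\le 2^{n}-1$, a word $u_a$ of length $n$ with $\bin(u_a)=a$; there are $2^{n-1}$ of them and they are left quotients of $\Prime$ of order $n$. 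Enumerate these odd residues as $a_1<a_2<\cdots<a_K$ with $K=2^{n-1}$.

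The key step is to construct, for every $j\in\set{1,\ldots,K}$, a word $w_j$ whose profile with respect to $u_{a_1},\ldots,u_{a_K}$ is $0$ on the first $j-1$ columns and $1$ on the $j$-th column. Such profiles have their first $1$ in distinct positions, hence are pairwise distinct, so the query table of $\Prime$ of order $n$ has size at least $K=2^{n-1}\ge 2^{n/2}$, and $C=\tfrac12$ works; by Theorem~\ref{thm:query_table} this rules out $f=o(n)$. To build $w_j$ it is enough to find $b_j\in\N$ with $a_j+2^{n}b_j$ prime and $a_i+2^{n}b_j$ composite for all $i<j$, and then take $w_j$ to be any word with $\bin(w_j)=b_j$. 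For the composite conditions I pick $j-1$ distinct primes $p_1,\ldots,p_{j-1}$, each strictly larger than $2^{n}$, and impose $b_j\equiv -a_i\,(2^{n})^{-1}\pmod{p_i}$ for $i<j$, which forces $p_i\mid a_i+2^{n}b_j$; by the Chinese Remainder Theorem this is a single congruence $b_j\equiv c\pmod P$ with $P=p_1\cdots p_{j-1}$. Along this class, $a_j+2^{n}b_j$ runs through an arithmetic progression of common difference $2^{n}P$, and Dirichlet's theorem on primes in arithmetic progressions supplies infinitely many prime members of it; choosing one far enough out also guarantees $a_i+2^{n}b_j>p_i$ for every $i<j$, so those numbers are genuinely composite.

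The one point requiring care is the coprimality hypothesis needed to invoke Dirichlet's theorem on the progression $a_j+2^{n}(c+Pm)$: its first term must be coprime to $2^{n}P$. Coprimality with $2^{n}$ is immediate since $a_j$ is odd; coprimality with each $p_i$ holds because modulo $p_i$ the first term is congruent to $a_j-a_i$, a nonzero integer of absolute value below $2^{n}$. This is precisely why the killer primes $p_i$ are taken larger than $2^{n}$ — otherwise a $p_i$ could divide $a_j+2^{n}b_j$ and annihilate the whole progression. I expect this coprimality bookkeeping to be the only real obstacle; the rest is routine, and, as in Hartmanis and Shank's original argument, the proof is unconditional. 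Note that this construction genuinely exceeds what Proposition~\ref{prop:hs} gives: that proposition produces $2^{n-1}$ distinct \emph{columns} (hence only $\log$-many distinct rows), whereas here we directly produce $2^{n-1}$ distinct \emph{rows} of the query table.
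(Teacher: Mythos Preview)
Your argument is correct, and it takes a genuinely different route from the paper's. The paper invokes the Maier--Pomerance theorem (Theorem~\ref{thm:maier_pomerance}): for each odd residue $a<2^n$ it finds a prime $p\equiv a\pmod{2^n}$ that is \emph{isolated} in an interval of radius $2^n$, which forces the profile of the corresponding word $w$ to be the indicator of the single column $u_a^{-1}\Prime$. This immediately yields $2^{n-1}$ distinct rows.

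You instead use only Dirichlet's theorem and a CRT sieve. Rather than isolating a prime from \emph{all} its neighbours, you kill the columns $i<j$ one by one with auxiliary primes $p_i>2^n$, and then apply Dirichlet on the residual progression modulo $2^nP$ to make the $j$-th column hit a prime. The coprimality verification (the only delicate point, as you note) is sound: modulo $p_i$ the first term is $a_j-a_i$, nonzero of absolute value below $2^n<p_i$. The resulting profiles, distinguished by the position of their first~$1$, are pairwise distinct and there are $2^{n-1}$ of them.

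What each approach buys: the paper's proof produces sharper profiles (exact indicators) at the cost of citing a substantially deeper 1990 result on large gaps in arithmetic progressions. Your proof is more elementary---Dirichlet (1837) plus the Chinese Remainder Theorem suffice---and is self-contained modulo classical number theory; the trade-off is that you do not control the columns $i>j$, but that control is not needed for the lower bound.
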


We state the following immediate corollary for Turing machines, relying on Theorem~\ref{thm:turing_machines}.

\begin{corollary}
The set of prime numbers written in binary cannot be recognised by a one-way realtime alternating Turing machine using sublogarithmic space.
\end{corollary}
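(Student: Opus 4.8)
The plan is to use Theorem~\ref{thm:query_table}: it suffices to exhibit a constant $C > 0$ such that for infinitely many $n$, the query table of $\Prime$ of order $n$ has size at least $2^{C \cdot n}$. Since Proposition~\ref{prop:hs} already tells us that all the left quotients of $\Prime$ of order $n$ (by words starting with a $1$) are pairwise distinct, the obstacle is not that the quotients coincide, but that distinctness of the columns of the query table only gives a \emph{linear} lower bound $n-1$ on the number of rows; we need exponentially many distinct \emph{profiles}. So the heart of the argument must be a quantitative strengthening of Proposition~\ref{prop:hs}: we need that for a set of roughly $cn$ words $u^{(1)}, \ldots, u^{(m)}$ of length $\le n$, and for \emph{every} subset $S \subseteq \set{1,\ldots,m}$, there is a word $w$ (of length $O(n)$) that is a common ``witness'' distinguishing exactly the quotients indexed by $S$ from the others — i.e. $w \in (u^{(i)})^{-1}\Prime$ iff $i \in S$. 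That yields $2^m = 2^{\Omega(n)}$ profiles in the query table of order $O(n)$, hence, after rescaling $n$, the desired bound.

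First I would recall how Hartmanis and Shank's proof of Proposition~\ref{prop:hs} works: given $u \ne v$ of length $n$ both starting with $1$, one wants $w$ with $\bin(uw)$ prime and $\bin(vw)$ composite (or vice versa). Writing $N = |uw|$, we have $\bin(uw) = \bin(u) + 2^{n} \cdot \bin(w')$ for the appropriate shift, so as $w$ ranges over suffixes, $\bin(uw)$ and $\bin(vw)$ run through the arithmetic progressions $\bin(u) + 2^n \Z$ and $\bin(v) + 2^n \Z$; since $\bin(u) \ne \bin(v) \pmod{2^n}$ and both are coprime to $2^n$ (the high bit being $1$ ensures nonzero, and one handles parity by appending a final $1$ bit), Dirichlet's theorem guarantees a prime in one progression landing where the other value is forced composite — e.g. by also fixing enough low-order structure, or by a pigeonhole/counting argument comparing prime densities in the two residue classes. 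The key point to extract is that one can choose $w$ so that $\bin(uw)$ is prime while simultaneously $\bin(vw)$ is divisible by some fixed small prime.

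The main step, then, is the simultaneous version: given $m = \Theta(n)$ words $u^{(1)},\ldots,u^{(m)}$, each of length exactly $n$ and starting with $1$, chosen so that their binary values $a_i = \bin(u^{(i)})$ lie in distinct residue classes mod $2^n$ — take for instance $a_i = 2^{n-1} + i$ for $i = 1, \ldots, m$ — and a target set $S$, I want $w$ with $a_i + 2^n \cdot t$ prime for $i \in S$ and composite for $i \notin S$, where $t = \bin(w')$. One approach: use the Chinese Remainder Theorem to pick, for each $i \notin S$, a distinct small prime $p_i$ and impose $t \equiv r_i \pmod{p_i}$ so that $a_i + 2^n t \equiv 0 \pmod{p_i}$ (possible since $2^n$ is invertible mod the odd $p_i$), while imposing that $a_i + 2^n t$ is \emph{not} divisible by $p_i$ for the same reason when $i \in S$; combine all these congruences into one congruence $t \equiv r \pmod{M}$ with $M = \prod p_i$, and then invoke Dirichlet (or a stronger density statement, e.g. that a positive proportion of the progression is prime in a dyadic window) to find a single $t$ in this progression such that $a_{i^*} + 2^n t$ is \emph{prime} for at least one distinguished $i^* \in S$ — but we need \emph{all} of $S$ simultaneously prime, which a single Dirichlet application will not give.

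The hard part will be exactly this: forcing simultaneous primality of all of $\{a_i + 2^n t : i \in S\}$ for one value of $t$. A counting/sieve argument seems necessary — for instance, showing that for $t$ ranging over a dyadic interval $[T, 2T]$ intersected with the CRT progression, the number of $t$ for which \emph{some} $a_i + 2^n t$ ($i \in S$) is composite is a strictly smaller order than the size of the interval, using that each composite-forcing event has density bounded away from $1$ and that $m$ is only $\Theta(n) = O(\log T)$, so a union bound over the $|S| \le m$ events still leaves room. This requires a quantitative prime-density input stronger than bare Dirichlet — a uniform lower bound on $\pi(x; q, a)$ for $q$ up to a small power of $\log x$, which is available unconditionally (Siegel–Walfisz, or even just Linnik-type bounds combined with the fact that $q = M \cdot 2^n$ is bounded by a fixed power of $x$). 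I would structure the proof to isolate this as a single number-theoretic lemma (``given $m = O(\log x)$ admissible arithmetic progressions with moduli $O(\mathrm{poly}\log x)$, a positive proportion of $t \le x$ makes a prescribed subset of the linear forms simultaneously prime and the complement all composite''), prove it by the union-bound-over-dyadic-window argument sketched above, and then feed it into the query-table framework to conclude.
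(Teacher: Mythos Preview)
The proposal has a genuine gap in the number-theoretic step. The ``hard part'' you correctly isolate --- forcing $a_i + 2^n t$ to be simultaneously prime for all $i \in S$ at a single value of $t$ --- is essentially the prime $k$-tuples (Dickson) conjecture, which is open. Your proposed union bound does not work: for $t$ ranging over $[T, 2T]$, the density of $t$ for which $a_i + 2^n t$ is \emph{prime} is about $1/\log T$, so the density for which it is \emph{composite} is $1 - 1/\log T$, i.e.\ nearly $1$, not ``bounded away from $1$''. Hence the bad event ``some $a_i + 2^n t$ with $i \in S$ is composite'' has density close to $1$; the set of $t$ making all $|S|$ linear forms prime has heuristic density $(1/\log T)^{|S|}$, and no result of Siegel--Walfisz or Linnik type (which concern primes in a \emph{single} progression) converts that heuristic into a lower bound. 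The ``single number-theoretic lemma'' you propose to isolate is, for $|S| \ge 2$, an open problem.

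The paper sidesteps this entirely. It does not attempt to realise arbitrary subsets $S$ as profiles; it realises only the $2^{n-1}$ \emph{singleton} profiles --- for each $u$ of length $n$ starting with $1$, it exhibits a word $w$ such that $w \in v^{-1}\Prime$ if and only if $v = u$. This requires merely an \emph{isolated} prime in the progression $\bin(u) + 2^n \N$, namely a prime $p$ in that progression with no other prime in $[p - 2^n, p + 2^n]$, which is available unconditionally by a theorem of Maier and Pomerance on large gaps in arithmetic progressions. That already yields $2^{n-1}$ distinct profiles, hence a query table of order $n$ of size at least $2^{n-1}$, which is the exponential bound you were aiming for. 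The corollary then falls out immediately from Theorem~\ref{thm:turing_machines}: a one-way realtime alternating Turing machine using $o(\log n)$ space simulates to an alternating machine with $2^{o(\log n)} = n^{o(1)}$ states, contradicting the sublinear OSC lower bound. (The paper in fact explicitly remarks that populating the query table beyond these singleton profiles would require Dickson-type statements --- precisely the obstacle your plan runs into.)
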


Our result is unconditional, but it relies on the following advanced theorem from number theory, 
which can be derived from the results obtained by Maier and Pomerance~\cite{MP90}.
Note that their results are more general; we simplified the statement to make it both simple and closer to what we actually use.

Simply put, this result says that in any (reasonable) arithmetic progression and for any degree of isolation, 
there exists a prime number in this progression, isolated with respect to all prime numbers.

\begin{theorem}[\cite{MP90}]
\label{thm:maier_pomerance}
For every arithmetic progression $a + b \N$ such that $a$ and $b$ are coprime, for every $N$,
there exists a number $k$ such that $p = a + b \cdot k$ is the only prime number in $[p-N,p+N]$.
\end{theorem}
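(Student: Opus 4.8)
The plan is to establish this by a Chinese-Remainder-Theorem construction together with Dirichlet's theorem on primes in arithmetic progressions: I will manufacture a prime-free ``wall'' of width $N$ on each side of the sought prime by forcing small-prime divisibility on the surrounding integers, and then invoke Dirichlet to drop an actual prime, lying in the residue class $a \bmod b$, into the gap. In essence this is the classical observation that a prime can sit isolated in the middle of a long prime desert, refined to control the residue class.

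First I would fix $N$ and the coprime pair $(a,b)$ and pick $2N$ distinct primes $q_1 < q_2 < \cdots < q_{2N}$, all strictly larger than $\max(N,b)$. Since a prime exceeding $b$ cannot divide $b$, each $q_j$ is coprime to $b$, so the modulus $W := q_1 q_2 \cdots q_{2N}$ satisfies $\gcd(W,b)=1$. Next, fixing any bijection $\sigma$ from $\set{-N,\dots,-1,1,\dots,N}$ onto $\set{1,\dots,2N}$, the Chinese Remainder Theorem produces a residue $r \bmod W$ with $r \equiv -i \pmod{q_{\sigma(i)}}$ for every $i$ in the index set. Because $0 < |i| \le N < q_{\sigma(i)}$, no individual congruence forces $r$ to vanish modulo $q_{\sigma(i)}$, so $\gcd(r,W)=1$.

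Then I would merge the two moduli. As $\gcd(b,W)=1$, there is a residue $a^\ast \bmod bW$ with $a^\ast \equiv a \pmod b$ and $a^\ast \equiv r \pmod W$; being coprime to $b$ and to $W$ separately, it is coprime to $bW$. By Dirichlet's theorem the progression $a^\ast + bW\,\N$ contains infinitely many primes, and I would select one, call it $p$, with $p > q_{2N} + N$. Write $k := (p-a)/b$, which lies in $\N$ because $p \equiv a \pmod b$ and $p$ is large; thus $p = a + b k$. For every $i$ with $1 \le |i| \le N$ we have $p + i \equiv r + i \equiv 0 \pmod{q_{\sigma(i)}}$ and $p + i > q_{\sigma(i)} > 1$, so $p+i$ is a proper multiple of $q_{\sigma(i)}$, hence composite. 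Therefore the only prime in $[p-N,\,p+N]$ is $p$ itself, which is precisely the claim.

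The only ingredient beyond elementary arithmetic is Dirichlet's theorem; the rest is CRT bookkeeping. The step needing the most care --- and the reason I insist on $q_j > \max(N,b)$ --- is to keep the central residue $r$ coprime to $W$ (so the middle value retains a chance to be prime) while simultaneously keeping $W$ coprime to $b$ (so the two progressions can be combined and Dirichlet remains applicable); with the $q_j$ chosen that large, the naive covering that assigns one small prime to each of the $2N$ flanking positions already suffices. Extracting instead the full quantitative strength of Maier--Pomerance (a count of valid $k$, or weakened hypotheses) would require replacing this naive covering by the Erd\H{o}s--Rankin construction and Dirichlet by the Siegel--Walfisz theorem, but the simplified statement above needs none of that.
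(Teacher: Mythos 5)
Your proof is correct, and every step checks out: choosing $q_1,\dots,q_{2N}$ all larger than $\max(N,b)$ simultaneously guarantees $\gcd(W,b)=1$ (so the two progressions can be merged) and, since $0<|i|\le N<q_{\sigma(i)}$, that no congruence $r\equiv -i\pmod{q_{\sigma(i)}}$ forces $r\equiv 0$, so $\gcd(r,W)=1$ and the merged residue $a^\ast$ is coprime to $bW$; Dirichlet then supplies a prime $p\equiv a^\ast\pmod{bW}$ with $p>q_{2N}+N$, and for each $1\le |i|\le N$ the covering gives $q_{\sigma(i)}\mid p+i$ with $p+i>q_{\sigma(i)}$, so every flanking number is composite.

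The comparison with the paper is unusual: the paper does not prove this statement at all. It cites it as a consequence of Maier and Pomerance~\cite{MP90}, calling it ``an advanced theorem from number theory,'' and notes that the source results are strictly more general. What you have shown is that the simplified statement, in the form actually used by the paper, is in fact elementary modulo Dirichlet's theorem --- a Chinese-Remainder covering of the $2N$ flanking positions plus one application of Dirichlet to the combined modulus $bW$. Your closing remark correctly locates where the extra strength of Maier--Pomerance would be needed (quantitative density of suitable $k$, or gap lengths growing with $p$, via the Erd\H{o}s--Rankin construction and Siegel--Walfisz) and why the fixed-window version here needs none of it. This is a genuine simplification that would let the paper replace the citation with a short self-contained argument.
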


We proceed to the proof of Theorem~\ref{thm:prime}.

\begin{proof}
We show that for all $n > 1$, the query table of $\Prime$ of order $n$ has size at least $2^{n-1}$.
Thanks to Theorem~\ref{thm:query_table}, this implies the result.

Fix $n > 1$. Denote by $U$ the set of all words $u$ of length $n$ starting with a $1$. 
Equivalently, we see $U$ as a set of numbers; it contains all the odd numbers smaller than $2^n$.
It has cardinal $2^{n-1}$.

We argue that for all $u$ in $U$, there exists a word $w$ which satisfies that for all $v$ in $U$, 
$w$ is in $v^{-1} \Prime$ if, and only if, $u = v$.
In other words the profile of $w$ is $0$ everywhere but on the column $u^{-1} \Prime$.
Let $u$ in $U$; denote $a = \bin(u)$.
Consider the arithmetic progression $a + 2^n \N$; note that $a$ and $2^n$ are coprime.
Thanks to Theorem~\ref{thm:maier_pomerance}, for $N = 2^n$, there exists 
a number $k$ such that $p = a + 2^n \cdot k$ is the only prime number in $[p-N,p+N]$.
Denote $w$ a word such that $\bin(w) = k$.
We show that for all $v$ in $U$, we have the following equivalence: $w$ is in $v^{-1} \Prime$ if, and only if, $u = v$.

Indeed, $\bin(vw) = \bin(v) + 2^n \cdot \bin(w)$. 
Observe that 
$$|\bin(vw) - \bin(uw)| = |\bin(v) - \bin(u)| < 2^n.$$
Since $p$ is the only prime number in $[p-2^n,p+2^n]$, the equivalence follows.

We constructed $2^{n-1}$ words each having a different profile, implying the claimed lower bound.
\hfill\qed
\end{proof}

Theorem~\ref{thm:prime} proves a linear lower bound on the alternating OSC of $\Prime$.
We do not know of any non-trivial upper bound, and believe that there are none, meaning that $\Prime$ does not have subexponential alternating OSC.

An evidence for this is the following probabilistic argument. Consider the distribution of languages over $\set{0,1}^*$
such that a word $u$ in thrown into the language with probability $\frac{1}{|u|}$.
It is a common (yet flawed) assumption that the prime numbers satisfy this distribution, as witnessed for instance by the Prime Number theorem.
One can show that with high probability such a language does not have subexponential alternating OSC,
the reason being that two different words are very likely to induce different profiles in the query table.
Thus it is reasonable to expect that $\Prime$ does not have subexponential alternating OSC.

\vskip1em
We dwell on the possibility of proving stronger lower bounds for the alternating OSC of $\Prime$.
Theorem~\ref{thm:maier_pomerance} fleshes out the \textit{sparsity} of prime numbers:
it constructs isolated prime numbers in any arithmetic progression,
and allows us to show that the query table of $\Prime$ contains all profiles with all but one boolean value set to false.

To populate the query table of $\Prime$ further, one needs results witnessing the \textit{density} of prime numbers,
\textit{i.e.} to prove the existence of clusters of prime numbers.
This is in essence the contents of the Twin Prime conjecture, or more generally of Dickson's conjecture,
which are both long-standing open problems in number theory,
suggesting that proving better lower bounds is a very challenging objective.
The Dickson's conjecture reads:

\begin{conjecture}[Dickson's Conjecture]
Fix $b$ and a subset $S \subseteq \set{1,\ldots,b-1}$ such that 
there exists no prime number $p$ which divides $\prod_{a \in S} (b \cdot k + a)$ for every $k$ in $\N$.
Then there exists a number $k$ such that $b \cdot k + a$ is prime for each $a$ in $S$.
\end{conjecture}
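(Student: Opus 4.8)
The plan is to attack Dickson's Conjecture through analytic number theory, along the lines of the Hardy--Littlewood heuristic that makes it plausible in the first place. Fixing $b$ and an admissible $S \subseteq \set{1,\ldots,b-1}$, I would study the counting function
\[
\pi_S(x) = \left|\set{k \le x \mid b \cdot k + a \text{ is prime for every } a \in S}\right|
\]
and aim to prove $\pi_S(x) \to \infty$, ideally with the conjectured asymptotic $\pi_S(x) \sim \mathfrak{S}(S)\, x / (\log x)^{|S|}$. The admissibility hypothesis — that no prime $p$ divides $\prod_{a \in S}(b k + a)$ for all $k$ — is precisely the local condition guaranteeing that the singular series $\mathfrak{S}(S)$ is strictly positive, so there is no local obstruction and only a global main term needs to be produced. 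A first attempt would use the circle method: form the exponential sum over $k \le x$ weighted by the von Mangoldt function at each $b k + a$, isolate the major-arc contribution to recover $\mathfrak{S}(S)$, and bound the minor arcs.

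I expect the circle method to fail here, and this is the crux: with a single free variable $k$ subject to $|S|$ simultaneous primality conditions, the problem is of \emph{sieve} type rather than additive type, and the minor arcs carry no available cancellation. The realistic route is therefore a weighted sieve applied to the polynomial $F(k) = \prod_{a \in S}(b k + a)$: construct Selberg-type or GPY/Maynard--Tao-type weights supported on $k$ that simultaneously force each $b k + a$ to have few prime factors, then push the level of distribution high enough to collapse every factor to a single prime at once. Establishing a good level of distribution for the shifted progressions $b k + a$ with $a \in S$, uniformly in $k$, would be the main technical workload of this route.

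The hard part — and the reason Dickson's Conjecture is a famous open problem — is the \textbf{parity obstruction} of sieve theory: a pure sieve cannot distinguish numbers with an even number of prime factors from those with an odd number, so when $|S| \ge 2$ it cannot, on its own, deliver a $k$ for which all of $b k + a$ have exactly one prime factor. Breaking parity requires an input from outside the sieve: an unconditional equidistribution result for primes in arithmetic progressions past level $1/2$ (a form of the Elliott--Halberstam conjecture, itself open), or a bilinear / Type-I--Type-II decomposition strong enough to inject genuine arithmetic information, in the spirit of the work of Zhang, Maynard and Tao on bounded gaps between primes. But those methods produce \emph{some} admissible configuration among many candidates, not the \emph{prescribed} tuple $S$ that Dickson demands, so they do not suffice here. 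Absent such an input I would not expect the argument to close; the honest fallbacks are a conditional theorem assuming Elliott--Halberstam, or an ``almost all'' version asserting that $b k + a$ is prime for every $a \in S$ but one.
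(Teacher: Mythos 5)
This statement is Dickson's Conjecture, a long-standing open problem in analytic number theory; the paper does not prove it (nor does anyone else), but states it as background before introducing a slight strengthening (Conjecture~\ref{conjecture:gen}) that is then used as an unproven hypothesis in a conditional theorem. You correctly recognized that no proof can be supplied: your survey of the circle method's inapplicability, the sieve-theoretic reformulation via $F(k)=\prod_{a\in S}(bk+a)$, and especially the parity obstruction that blocks any pure-sieve approach for $|S|\ge 2$ is an accurate account of why the conjecture remains open, and your honest conclusion that the argument ``does not close'' absent an Elliott--Halberstam-type input or a bilinear decomposition \`a la Zhang/Maynard--Tao is exactly right. The only thing to flag is a framing point: what was asked for is a \emph{conjecture} statement, and the paper treats it as such, so ``no proof'' is the correct answer here rather than a failure on your part. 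It would have been worth noting, as the paper does, that the admissibility hypothesis (no prime $p$ divides $\prod_{a\in S}(bk+a)$ for all $k$) is precisely the local/necessary condition ensuring the singular series is positive --- you do mention this --- and that even granting Dickson the paper needs the stronger Conjecture~\ref{conjecture:gen}, which additionally demands that $bk+a$ be \emph{composite} for $a\notin S$, in order to pin down query-table profiles exactly rather than merely from below.
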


Unfortunately, the Dickson's conjecture is not enough; indeed, it constructs words whose profiles contain at least a certain number
of booleans set to true (the ones from the subset $S$).
To obtain different profiles, one needs to ensure that these profiles contain exactly these booleans set to true.
Hence the following slight generalisation of Dickson's conjecture.

\begin{conjecture}
\label{conjecture:gen}
Fix $b$ and a subset $S \subseteq \set{1,\ldots,b-1}$ such that 
there exists no prime number $p$ which divides $\prod_{a \in S} (b \cdot k + a)$ for every $k$ in $\N$.
Then there exists a number $k$ such that $b \cdot k + a$ is prime if, and only if, $a$ is in $S$.
\end{conjecture}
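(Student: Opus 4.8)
This last statement is a conjecture, not a theorem, so strictly speaking there is nothing to prove here; the purpose of this plan is to explain how Conjecture~\ref{conjecture:gen} would follow from the classical Dickson's conjecture, which makes clear that the obstruction to a better lower bound in Section~\ref{sec:prime} is a genuine number-theoretic one rather than a limitation of our method.

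The plan is to reduce Conjecture~\ref{conjecture:gen} to Dickson's conjecture by a covering-congruence argument that enlarges the modulus. Fix $b$ and $S \subseteq \set{1,\ldots,b-1}$ satisfying the hypothesis, and put $T = \set{1,\ldots,b-1} \setminus S$. For each $a \in T$ pick a prime $q_a > b$, the $q_a$ pairwise distinct; since $q_a$ is then coprime to $b$, there is a residue $c_a$ with $b c_a + a \equiv 0 \pmod{q_a}$. By the Chinese Remainder Theorem the congruences $k \equiv c_a \pmod{q_a}$ combine into $k \equiv c \pmod{M}$ with $M = \prod_{a \in T} q_a$. Writing $k = c + M j$ gives $bk + a = (bM)j + (bc + a)$ for every $a$; for $a \in T$ this is divisible by $q_a$ and, once $j$ is large, is a proper multiple of $q_a$, hence composite. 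So it suffices to produce arbitrarily large $j$ for which $(bM)j + (bc + a)$ is prime for every $a \in S$, which is exactly what Dickson's conjecture yields when applied with modulus $bM$ and residue set $\set{bc + a \mid a \in S}$ --- provided this new instance still satisfies the no-common-prime-divisor hypothesis.

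Verifying that inheritance is the one computational step, and it splits into three easy cases for a hypothetical prime $p$ dividing $\prod_{a \in S}\bigl((bM)j + bc + a\bigr)$ for all $j$. If $p = q_{a_0}$ for some $a_0 \in T$, each factor is $\equiv a - a_0 \pmod p$, which is nonzero because $0 < |a - a_0| < b < p$. If $p \mid b$, each factor is $\equiv a \pmod p$, and nonzero because for such $p$ the original hypothesis forces $p \nmid \prod_{a \in S} a$. If $p \nmid bM$, then as $j$ varies the residue classes of $j$ for which $p$ divides the product are in bijection with the distinct classes $\set{a \bmod p \mid a \in S}$, which by the original hypothesis do not exhaust $\Z/p\Z$, so some $j$ avoids them all. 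In every case some $j$ makes the product coprime to $p$, so the new instance satisfies Dickson's hypothesis and the reduction goes through.

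The main obstacle is of course that Dickson's conjecture is itself wide open --- it in particular implies the Twin Prime conjecture --- which is precisely why Conjecture~\ref{conjecture:gen} is stated as a conjecture; the argument above establishes it only conditionally. A minor additional point is that one needs the ``infinitely many'' strengthening of Dickson's conjecture, so that $j$ can be taken large enough for the numbers $(bM)j + bc + a$ with $a \in T$ to exceed the corresponding $q_a$ and hence be genuinely composite; since this is the form in which Dickson's conjecture is usually stated, it costs nothing.
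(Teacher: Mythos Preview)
You are right that this is a conjecture and that the paper offers no proof. In fact the paper does strictly less than you: it introduces Conjecture~\ref{conjecture:gen} as ``a slight generalisation of Dickson's conjecture'' and explicitly writes that ``Dickson's conjecture is not enough'' for the application in Section~\ref{sec:prime}. Your covering-congruence reduction shows this is too pessimistic: Conjecture~\ref{conjecture:gen} actually \emph{follows} from (the standard infinitely-many form of) Dickson's conjecture, so the two are equivalent. That is a genuine addition to what the paper contains, since it pins the obstruction precisely at Dickson's conjecture rather than at an apparently stronger statement.

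Your argument is correct. The three-case verification that the lifted tuple $\set{(bM)j+bc+a : a\in S}$ remains admissible goes through: the choice $q_{a_0}>b$ makes the residues $a-a_0$ nonzero in the first case; for $p\mid b$ the original hypothesis forces $p\nmid\prod_{a\in S}a$, which handles the second; and for $p\nmid bM$ the affine change of variable $k\mapsto j$ preserves the number of residue classes modulo~$p$ hit by the tuple, so the original hypothesis transfers. The need for arbitrarily large $j$ (so that each $bMj+bc+a$ with $a\in T$ is a \emph{proper} multiple of $q_a$) is, as you note, absorbed by the usual ``infinitely many'' formulation of Dickson's conjecture, and in any case follows from the existential version applied to all admissible tuples by a shift.
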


\begin{theorem}
Assuming the Conjecture~\ref{conjecture:gen} holds true, 
the set of prime numbers written in binary does not have subexponential alternating online state complexity.
\end{theorem}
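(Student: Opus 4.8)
The plan is to imitate the proof of Theorem~\ref{thm:prime}, but to populate the query table of $\Prime$ of order $n$ with doubly-exponentially many profiles rather than the $2^{n-1}$ single-support profiles used there. By the remark following Theorem~\ref{thm:query_table}, it suffices to exhibit a constant $C>1$ such that, for infinitely many $n$, the query table of $\Prime$ of order $n$ has size at least $2^{C^n}$.

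Fix $n$ and set $b=2^n$. The first step is to isolate an \emph{admissible} family of residues: a set $U^\star$ of odd integers below $2^n$ with the property that for every odd prime $p$ the elements of $U^\star$ do not cover all residue classes modulo $p$. I would take $U^\star$ to consist of the first $k=\lfloor 2^{n/2}\rfloor$ odd primes that exceed $k$. By the Prime Number Theorem these primes all lie below $2^n$ once $n$ is large, so each of them is the binary code of a word of length $n$; moreover each element of $U^\star$ is a prime larger than every odd prime $p\le k=|U^\star|$, hence not divisible by such a $p$, so $U^\star$ omits the class $0 \bmod p$. For odd primes $p>|U^\star|$ admissibility is automatic, since $|U^\star|<p$.

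The second step applies Conjecture~\ref{conjecture:gen} to each subset $S\subseteq U^\star$, with modulus $b=2^n$. Its hypothesis holds for $S$: no prime divides $\prod_{a\in S}(2^n k+a)$ for every $k$, because for $p=2$ all factors are odd, and for an odd prime $p$ the residue $2^n k \bmod p$ runs through all of $\Z/p$ as $k$ varies, so the product avoids $0 \bmod p$ for some $k$ exactly because $S\subseteq U^\star$ omits a residue class modulo $p$. The conjecture then yields a number $k_S$ such that, for every $a\in\{1,\dots,2^n-1\}$, the integer $2^n k_S+a$ is prime if and only if $a\in S$. Let $w_S$ be a word with $\bin(w_S)=k_S$. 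For the length-$n$ word $v$ coding an element $a\in U^\star$ one has $\bin(vw_S)=\bin(v)+2^n\bin(w_S)=a+2^n k_S$, whence $vw_S\in\Prime \iff a\in S$. Thus the profile of $w_S$ with respect to the left quotients of $\Prime$ of order $n$, restricted to the columns indexed by $U^\star$, is precisely the indicator vector of $S$; distinct subsets give distinct profiles, so the query table of $\Prime$ of order $n$ has size at least $2^{|U^\star|}=2^{2^{\lfloor n/2\rfloor}}$. Since $2^{\lfloor n/2\rfloor}\ge(\sqrt2)^n/2$, this exceeds $2^{C^n}$ for any fixed $C<\sqrt2$ and all large $n$, and Theorem~\ref{thm:query_table} finishes the proof.

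The step I expect to be the real obstacle is the construction of $U^\star$: Conjecture~\ref{conjecture:gen} cannot be applied to arbitrary subsets of the odd numbers below $2^n$ — the full set already meets every residue class modulo $3$ — so one must work inside a subfamily that is simultaneously admissible modulo all small primes and still of size $2^{\Omega(n)}$, and the classical ``first $k$ primes above $k$'' admissible tuple is exactly what reconciles these two requirements. Everything else mirrors the proof of Theorem~\ref{thm:prime}, now with $2^{\Omega(n)}$ distinguished columns playing the role of the single one used there.
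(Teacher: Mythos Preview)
Your argument is correct and follows the same overall strategy as the paper: use the query table method, exhibit an exponentially large admissible set of odd residues modulo $2^n$, and then invoke Conjecture~\ref{conjecture:gen} on each of its subsets to produce doubly-exponentially many distinct profiles.

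The only genuine difference is in the construction of the admissible set. You take the classical prime-$k$-tuples construction $U^\star=\{\text{first }\lfloor 2^{n/2}\rfloor\text{ odd primes above }\lfloor 2^{n/2}\rfloor\}$, of size $2^{n/2}$, and verify admissibility by the standard small-prime/large-prime case split. The paper instead takes $S_0=\{a\in U\mid 2^n+a\text{ is prime}\}$, which has size $\sim 2^n/n$ by the prime number theorem, and checks $(\lozenge)$ by noting that any prime dividing $\prod_{a\in S_0}(2^n\cdot k+a)$ for all $k$ would in particular divide both $\prod_{a\in S_0}a$ (at $k=0$) and $\prod_{a\in S_0}(2^n+a)$ (at $k=1$), the latter forcing $p\ge 2^n+1$ and hence contradicting the former. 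The paper's set is larger and its admissibility argument slightly slicker, but both constructions yield a set of size $C^n$ for some $C>1$, which is all that is needed; your version has the advantage of being the textbook admissible tuple, and your identification of this step as the crux of the proof is exactly right.
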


\begin{proof}
We show that for all $n > 1$, the query table of $\Prime$ of order $n$ has size doubly-exponential in $n$.
Thanks to Theorem~\ref{thm:query_table}, this implies the result.

Fix $n > 1$. As above, denote by $U$ the set of all words $u$ of length $n$ starting with a $1$, \textit{i.e.} odd numbers. 

For a subset $S$ of $U$, denote $(\lozenge)$ the property that 
there exists no prime number $p$ which divides $\prod_{a \in S} (b \cdot k + a)$ for every $k$ in $\N$.

Let $S$ be a subset of $U$ satisfying $(\lozenge)$.
We argue that there exists a word $w$ which satisfies that for all $v$ in $U$, 
$w$ is in $v^{-1} \Prime$ if, and only if, $v$ is in $S$.
In other words the profile of $w$ is $1$ on the columns corresponding to $S$, and $0$ everywhere else.

Thanks to Conjecture~\ref{conjecture:gen}, there exists a number $k$ such that 
$2^n \cdot k + a$ is prime if, and only if $a$ is in $S$.
Denote $w$ a word such that $\bin(w) = k$, it clearly satisfies the condition above.

For each subset $S$ satisfying $(\lozenge)$ we constructed a word such that these words have pairwise different profiles.
To conclude, we need to explain why there are doubly-exponentially many subsets $S$ satisfying $(\lozenge)$.
Advanced sieve techniques from number theory can be used to precisely estimate this;
we will rely on a simpler argument here to obtain a doubly-exponential lower bound.
Indeed, observe that if there exists one subset $S$ satisfying $(\lozenge)$ of exponential size (\textit{i.e.} $C^n$ for some $C > 1$), 
then all subsets of $S$ also satisfy $(\lozenge)$, which yields doubly-exponentially many of them.
We claim that $S$ defined by $\set{a \in U \mid 2^n + a \text{ is a prime number}}$ satisfies $(\lozenge)$.
This follows from the remark that no prime number can divide both $\prod_{a \in S} a$ and $\prod_{a \in S} (2^n + a)$.
\hfill\qed
\end{proof}

\section*{Conclusion}

We have developed a generic lower bound technique for alternating online state complexity, and applied it to two problems.
The first result is to show that the polynomial hierarchy of alternating online algorithms is infinite.
The second result is to give lower bounds on the alternating online state complexity of the language of prime numbers;
we show that it is not sublinear, which is an exponential improvement over the previous result.
However, the exact complexity is left open; we conjecture that it is not subexponential, but obtaining this result may require
major advances in number theory.


\bibliographystyle{alpha}
\bibliography{bib}

\end{document}